\newtheorem{theorem}{Theorem}
\newtheorem{lem}[theorem]{Lemma}
\newtheorem{definition}[theorem]{Definition}
\newtheorem{remark}[theorem]{Remark}
\newtheorem{prop}[theorem]{Proposition}
\newtheorem{conj}[theorem]{Conjecture}
\theoremstyle{definition}
\newtheorem*{eg*}{Example~\ref{eg:simplest_nontrivial} cont}
\theoremstyle{remark}
\begin{document}

\title{Maximal Non-Kochen-Specker Sets and \\ a Lower Bound on the Size of Kochen-Specker Sets}

\author{Tom Williams}
\email{tommywilliams765@gmail.com}
\affiliation{School of Informatics, Quantum Software Lab, University of Edinburgh, UK}

\author{Andrei Constantin}
\email{andrei.constantin@physics.ox.ac.uk}
\affiliation{Rudolf Peierls Centre for Theoretical Physics, University of Oxford, Parks Road, Oxford OX1 3PU, UK}

\begin{abstract}
\noindent
%A Kochen-Specker (KS) set is a finite collection of vectors on the two-sphere containing no antipodal pairs for which it is impossible to assign 0s and 1s such that no two orthogonal vectors are assigned~1 and exactly one vector in every triplet of mutually orthogonal vectors is assigned~1. The existence of KS sets lies at the heart of Kochen and Specker's argument against non-contextual hidden variable theories and the Conway-Kochen free will theorem. Identifying small KS sets can simplify these arguments and may contribute to the understanding of the role played by contextuality in quantum protocols. In this paper we derive a weak lower bound of $10$ vectors for the size of any KS set by studying the opposite notion of large non-KS sets and using a probability argument that is independent of the graph structure of KS sets. We also point out in passing an interesting connection with a generalisation of the moving sofa problem around a right-angled hallway on $S^2$.

The challenge of determining bounds for the minimal number of vectors in a three-dimensional Kochen-Specker (KS) set has captivated the quantum foundations community for decades. This paper establishes a weak lower bound of 10 vectors, which does not surpass the current best-known bound of 24 vectors. By exploring the complementary concept of large non-KS sets and employing a probability argument independent of the graph structure of KS sets, we introduce a novel technique that could be applied in the future to derive tighter bounds. Additionally, we highlight an intriguing connection to a generalisation of the moving sofa problem in navigating a right-angled hallway on the surface of a two-dimensional sphere.

\end{abstract}

\maketitle

\section{Introduction}\label{Introduction}

The Kochen-Specker (KS) theorem~\cite{kochen_specker_1967} demonstrated that it is impossible to explain quantum theory by non-contextual hidden variable theories, where measurement outcomes are assumed to be independent of the specific set of compatible observables, or context, in which they are measured. As such, quantum theory is inherently contextual.
%
%Contextuality is the property of models in which the measurement outcomes depend on the set of measurements being performed together. Put differently, in contextual models the outcome of a measurement is not solely determined by the properties of the system but is influenced by the context of other measurements. 
%Quantum mechanics is fundamentally a contextual model as first demonstrated in 1967 by Kochen and Specker~\cite{kochen_specker_1967}. 

Since then, quantum contextuality has been a major topic of research, both as a fundamentally non-classical phenomenon and as a key resource for quantum computation~\cite{Howard2014, Raussendorf2013, Abramsky2017}. Several mathematical frameworks have been developed to study contextuality, including operational approaches~\cite{spekkens_2005}, sheaf theory~\cite{Abramsky2011}, the framework of contextuality-by-default~\cite{Dzhafarov2016_2}, graph and hypergraph theory~\cite{Cabello2014,Acin2015}. All proofs of the Kochen-Specker (KS) theorem, as well as the free will theorem of Conway and Kochen~\cite{conway2008strong} rely on the identification of small sets of measurements that exhibit contextuality, known as KS sets\footnote{Note, however, that state-independent contextuality can be demonstrated without the use of KS sets, as illustrated, for instance,  in~Refs.~\cite{yu2012state, bengtsson2012kochen}.}. While KS sets are best known for demonstrating state-independent contextuality, more modern applications include their necessity, under certain constraints, for quantum advantage in tasks like one-shot zero-error communication~\cite{cubitt2010improving}, for optimal quantum strategies in bipartite scenarios~\cite{cabello2025simplest}, as well as other applications~\cite{liu2024equivalence}.

The current record for the smallest three-dimensional KS set was found by Conway and Kochen and consists of 31 directions \cite{peres1997quantum}. %(see also \cite{conway2006free}). 
Going beyond three dimensions, it has been shown that the KS set of minimal cardinality occurs in dimension four and has 18 directions~\cite{cabello1996bell, PhysRevLett.124.230401}.
% For illustration, a conceptually easy to follow KS set is given in the following section containing 168 directions arranged in several nested Clifton graphs.
%
The main contribution of this paper is to explore the opposite notion of large, measurable non-contextual sets (non-KS sets). We find that the maximal three-dimensional non-KS set must cover at least a fraction of 0.8978 of the area of the two-sphere. We use this information to derive a lower bound on the number of directions contained in any KS set in three dimensions. Although this lower bound does not improve upon the current lower bound of 24 directions~\cite{li2024sat}, nor upon the previous bound of 22 directions~\cite{uijlen2016kochen, arends2011searching}, our argument is qualitatively different and does not rely on exhaustively checking the existence of KS graphs.

\section{Background}

{\bfseries Hidden Variables.} Quantum mechanics is non-deterministic: rather than predicting explicit outcomes of measurements as it is the case in classical mechanics, it merely predicts probabilities for such outcomes, whose indeterminacy is constrained by the uncertainty principle. Suppose this indeterminacy is not fundamental but is instead due to our incomplete knowledge of the system (akin to classical statistical mechanics); in this case, we may postulate the existence of a \textbf{valuation map} ${v}_{\psi}(\widehat{A})$, which is to be understood as the value of the observable (self-adjoint operator) $\widehat{A}$ when the quantum state is $\psi \in \mathcal{H}$, where $\mathcal{H}$ is the Hilbert space of the system.

It is impossible to make progress beyond this postulate without imposing additional conditions, the most natural one being the following. For any function $f:\mathbb{R} \to \mathbb{R}$,
\begin{equation}
    {v}_{\psi}(f(\widehat{A})) = f({v}_{\psi}(\widehat{A}))~.
\end{equation}
From this condition, it follows (see,~e.g.,~Ref.~\cite{isham2001lectures}) that for any commuting self-adjoint operators $\widehat{A},\widehat{B}$ and any state $\psi \in \mathcal{H}$,
\begin{equation}
\begin{aligned}
    {v}_{\psi}(\widehat{A}+\widehat{B})&={v}_{\psi}(\widehat{A})+{v}_{\psi}(\widehat{B})~, \\
    {v}_{\psi}(\widehat{A}\widehat{B})&={v}_{\psi}(\widehat{A}){v}_{\psi}(\widehat{B})~, \\
    {v}_{\psi}(\widehat{\mathbb{1}})&=1~.
\end{aligned}
\end{equation}

Consider now an arbitrary orthonormal basis (ONB) of~$\mathcal H$, $\{\ket{{e}_{1}},\ket{{e}_{2}},...,\ket{{e}_{d}}\}$ where $d$ is the dimension of the Hilbert space. The projectors ${\widehat{P}}_{i}:=\ket{{e}_{i}}\bra{{e}_{i}}$ satisfy 
\begin{equation}
    {\widehat{P}}_{i}^{2}={\widehat{P}}_{i}~, \qquad \sum {\widehat{P}}_{i}=\mathbb{1}~.
\end{equation}
Hence ${v}_{\psi}({\widehat{P}}_{i})={v}_{\psi}({\widehat{P}}_{i}^{2})={{v}_{\psi}({\widehat{P}}_{i})}^{2}$ and ${v}_{\psi}({\widehat{P}}_{i}) = 0 \text{ or } 1$.
Consequently, for any  ONB $\{\ket{{e}_{1}},\ket{{e}_{2}},...,\ket{{e}_{d}} \}$ exactly one projection operator ${\widehat{P}}_{i}=\ket{{e}_{i}}\bra{{e}_{i}}$ must have value~1 and all others must have value~0. This is equivalent to the colouring problem in which no two orthogonal directions can both be coloured~1 and exactly one direction in any orthogonal basis is coloured~1, all other directions being coloured~0.

\vspace{4pt}
{\bfseries The Kochen-Specker (KS) Theorem.} The KS theorem states that no such valuation map exists for Hilbert spaces of dimension greater than 2. 

Proofs of the KS theorem often take the form of constructing Kochen-Specker sets, defined as finite sets of rank-one projectors for which it is impossible to consistently assign outcomes such that
\begin{enumerate}[(KS.1)]
  \item no two orthogonal directions have outcome 1;
  \item exactly one direction in any orthogonal frame has outcome 1.
\end{enumerate}

Kochen and Specker's original proof~\cite{kochen_specker_1967} consisted of 117 directions in three dimensions built from several smaller sets with 8 directions each and known in the literature as Clifton graphs. \autoref{fig:1} shows a Clifton graph, denoted by ${\mathcal{G}}_{1}$, together with its bundle diagram. Note that some edges in the bundle diagram are triangles rather than lines as some measurement contexts consist of three measurements.
Clifton graphs are the smallest example of so-called \textbf{01-gadgets} which are sets of directions such that in any $\{0,1\}$-colouring there exist two non-orthogonal directions that cannot both be assigned the colour 1 (see Ref.~\cite{ramanathan2020gadget} for details). Clifton graphs enjoy the following useful property.

\begin{figure}[ht!]
\centering
  \includegraphics[width=.48\linewidth]{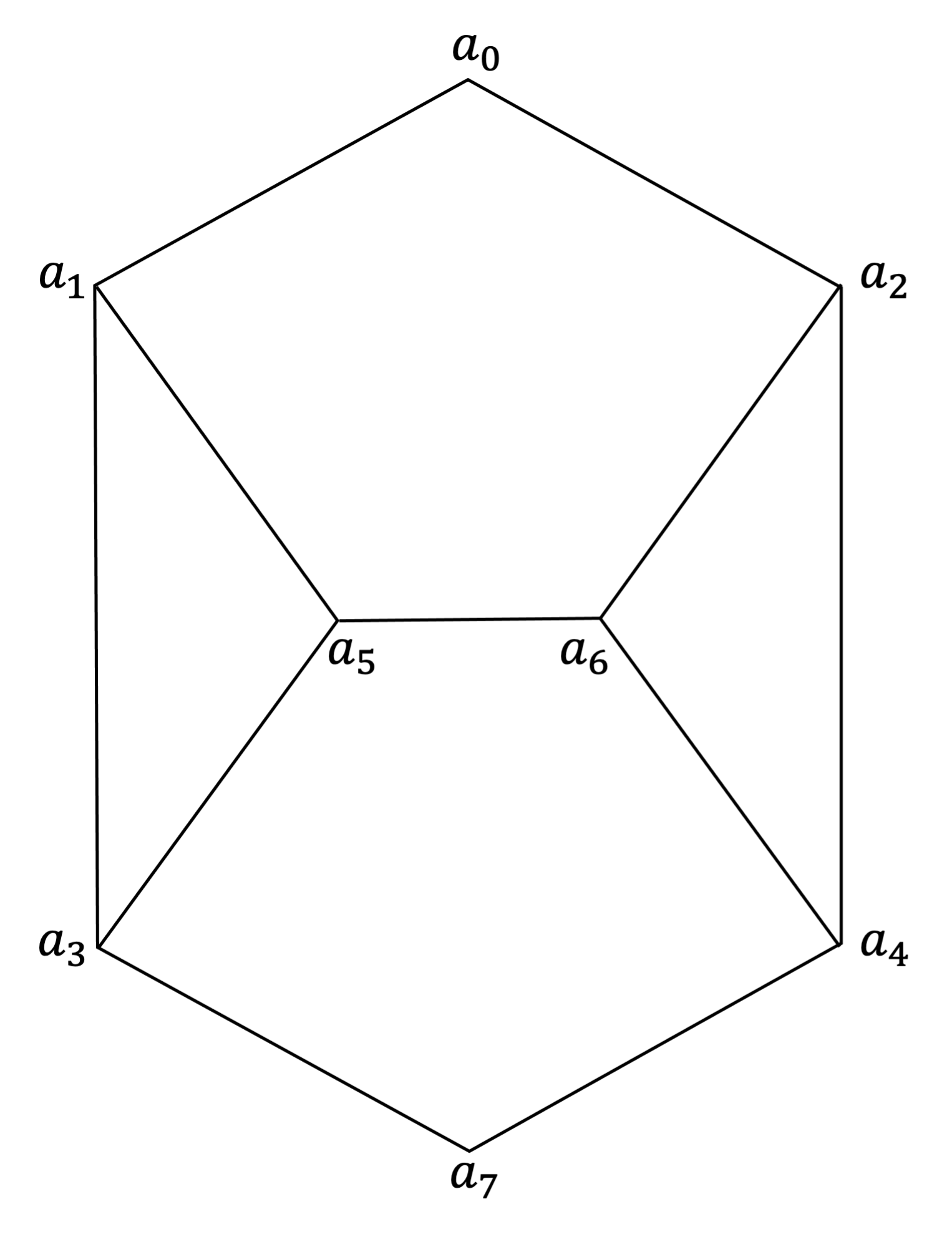}
  \includegraphics[width=.48\linewidth]{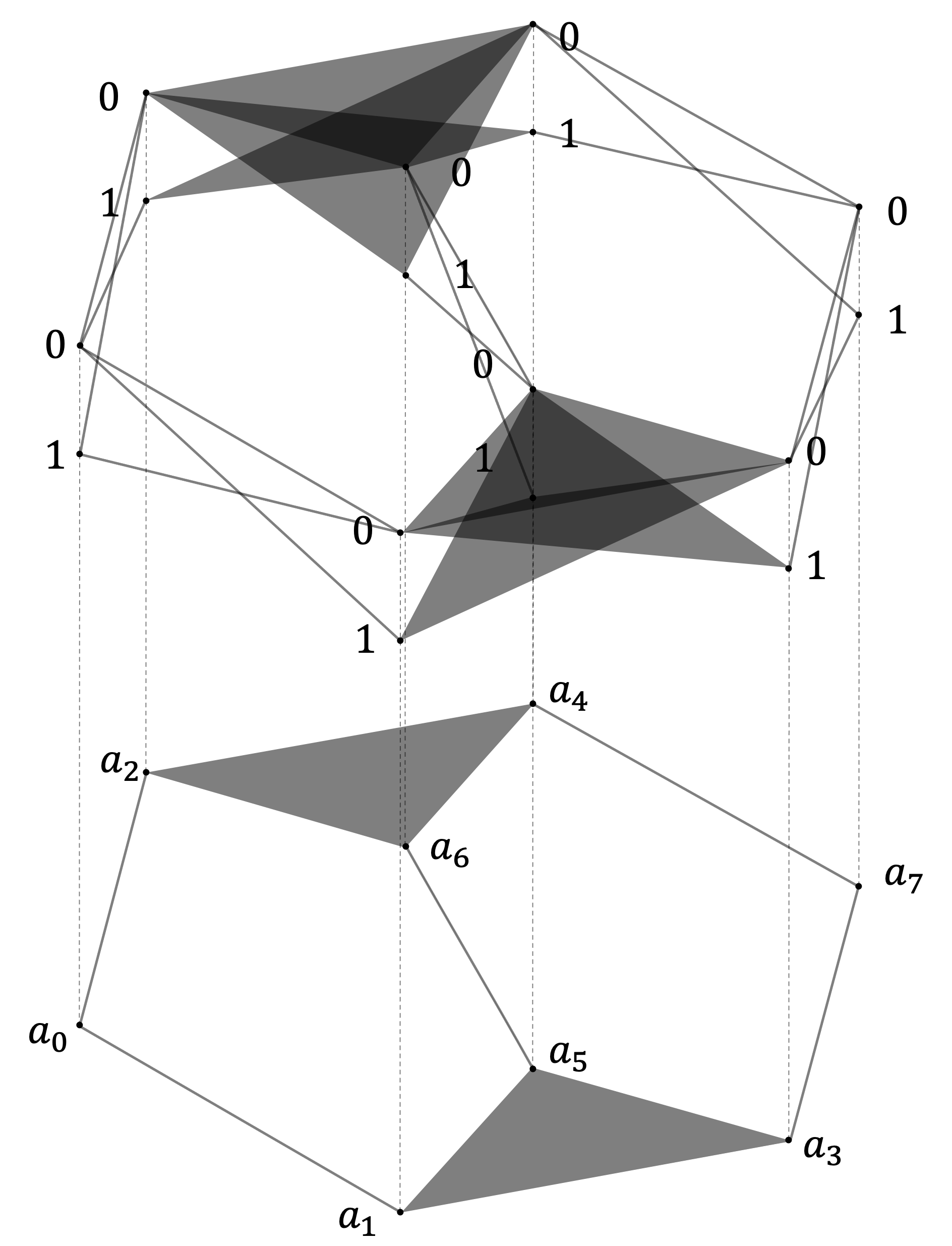}
  \caption{Left: the Clifton graph ${\mathcal{G}}_{1}$ where vertices are directions and edges join orthogonal directions. Right: the bundle diagram of ${\mathcal{G}}_{1}$.}
\label{fig:1}
\end{figure}

\begin{prop}
Directions ${a}_{0}$ and ${a}_{7}$ in \autoref{fig:1} cannot both have outcome 1.
\end{prop}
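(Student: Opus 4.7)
The plan is a proof by contradiction. Suppose a valuation $v$ satisfies (KS.1) and (KS.2) on the eight directions of $\mathcal G_1$ and that $v(a_0)=v(a_7)=1$. I would first use (KS.1) to propagate zeros: every vertex adjacent to $a_0$ in the orthogonality graph must receive the value 0, and the same for every vertex adjacent to $a_7$. This pins down the value of several of the intermediate directions $a_1,\dots,a_6$ immediately.

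Next, I would sweep through the orthogonal triples visible as triangular faces of the bundle diagram of \autoref{fig:1} and apply (KS.2) to each triple that already has two of its three vertices forced to 0; the remaining vertex is then forced to 1. Iterating this process, each new ``1'' in turn spreads more zeros via (KS.1) to its orthogonal neighbours, and so on. The claim is that after finitely many such deductions one arrives either at a single vertex required to be both 0 (because it is orthogonal to some already-forced 1) and 1 (because it completes an orthogonal triple of two zeros), or at an orthogonal pair both forced to 1 — in either case a contradiction.

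The main obstacle is purely combinatorial bookkeeping: the argument is driven entirely by the specific incidence pattern of $\mathcal G_1$ and its bundle diagram, so I would need to enumerate the orthogonal triples carefully and record, at each step, which vertex is newly forced and by which rule. A natural simplification is to exploit the symmetry of the Clifton graph that swaps $a_0\leftrightarrow a_7$ (and more generally $a_i\leftrightarrow a_{7-i}$): the consequences of $v(a_0)=1$ and of $v(a_7)=1$ can be derived in parallel, cutting the casework roughly in half and making the contradiction appear at a vertex lying on the symmetry axis.

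Finally, because nothing in the argument used that the colouring of the other vertices was fixed in advance, the conclusion holds for every valuation satisfying (KS.1) and (KS.2); thus $v(a_0)$ and $v(a_7)$ cannot simultaneously equal 1, establishing the proposition. I would also note in passing that this is precisely the property that qualifies $\mathcal G_1$ as a 01-gadget in the sense of~\cite{ramanathan2020gadget}, with $a_0$ and $a_7$ as the distinguished non-orthogonal pair.
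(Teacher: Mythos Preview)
Your plan is correct and is exactly the paper's argument: assume $v(a_0)=v(a_7)=1$, use (KS.1) to force $v(a_1)=v(a_2)=v(a_3)=v(a_4)=0$, then (KS.2) on the two triangles forces $v(a_5)=v(a_6)=1$, contradicting (KS.1) since $a_5\perp a_6$. The only difference is that the paper simply writes out these three steps explicitly rather than describing the propagation abstractly; for a graph this small the iteration you outline terminates immediately, so the symmetry shortcut, while valid, is unnecessary.
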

\begin{proof}
    Suppose $v({a}_{0}) = v({a}_{7}) = 1$, then condition (KS.1) implies $v({a}_{1}) =  v({a}_{2}) = v({a}_{3}) = v({a}_{4}) = 0$, then condition (KS.2) implies $v({a}_{5}) = v({a}_{6}) = 1$. However, condition (KS.1) does not allow this, hence the contradiction.
\end{proof}

Clifton graphs suffer from the limitation that the angle between the directions ${a}_{0}$ and ${a}_{7}$ in \autoref{fig:1} cannot exceed $\text{arccos}(\frac{1}{3})$ (see, e.g.~, Ref.~\cite{ramanathan2020gadget} for a proof). As such, several Clifton graphs are required to produce a KS set (or, using the language of Ref.~\cite{Abramsky2011}, a strongly contextual set). Kochen and Specker's proof uses 15 Clifton graphs and is shown in \autoref{fig:2}.

\begin{figure}[h!]
    \centering
    \includegraphics[width=.7\linewidth]{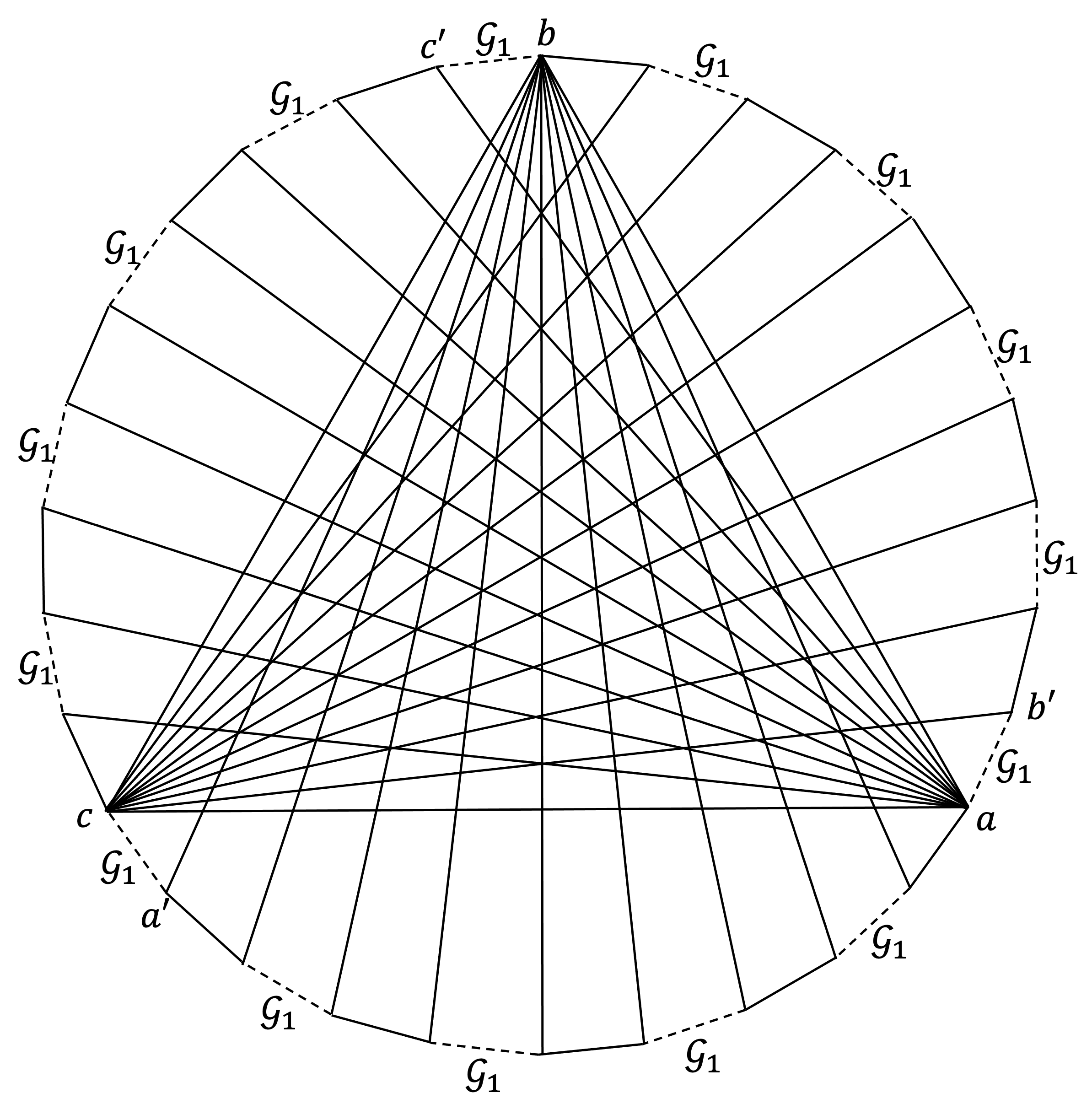}
    \caption{Kochen and Specker's 117 direction proof. Each solid line connects two orthogonal directions and each dashed line represents a Clifton graph ${\mathcal{G}}_{1}$. The following identifications hold: $a=a'$, $b=b'$, and $c=c'$.}
    \label{fig:2}
\end{figure}

\vspace{4pt}
{\bfseries A New KS Set.} A conceptually simple KS set consisting of 168 directions can be built using the idea of nested Clifton graphs~\cite{ramanathan2020gadget}, defined as follows. 

Consider the Clifton graph ${\mathcal{G}}_{1}$ but replace the edge $({a}_{5},{a}_{6})$ with a second Clifton graph ${\mathcal{G}}_{1}'$ with ${a}_{0}'={a}_{5}$ and ${a}_{7}'={a}_{6}$. The condition that directions ${a}_{0}$ and ${a}_{7}$ cannot both have outcome 1 is maintained. This process can be repeated iteratively. Denote an $n$-times nested Clifton graph by ${\mathcal{G}}_{n}$ with vectors ${a}_{0}^{(1)},...,{a}_{7}^{(1)},...,{a}_{0}^{(n)},...,{a}_{7}^{(n)}$, where ${a}_{5}^{(k)}={a}_{0}^{(k-1)}$ and ${a}_{6}^{(k)}={a}_{7}^{(k-1)}$ (see \autoref{fig:3a}). Then the following result holds (see Ref.~\cite{ramanathan2020gadget} for a proof).

\begin{prop}\label{Proposition 2.2}
In a nested Clifton graph ${\mathcal{G}}_{n}$ the directions ${a}_{0}^{(n)}$ and ${a}_{7}^{(n)}$ cannot both have outcome 1. The maximum angle between ${a}_{0}^{(n)}$ and ${a}_{7}^{(n)}$ is $\text{arccos}(\frac{n}{n+2})$.
\end{prop}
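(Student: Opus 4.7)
The plan is to prove both claims by induction on the nesting depth $n$. The base case $n=1$ is a single Clifton graph: the coloring claim is precisely the preceding proposition, and the angular bound $\arccos(1/3)$ is the fact about Clifton graphs stated in the paragraph just before this proposition.

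For the inductive step of the coloring statement, assume the result for $\mathcal{G}_{n-1}$ and suppose for contradiction that some valuation satisfies $v(a_0^{(n)}) = v(a_7^{(n)}) = 1$ in $\mathcal{G}_n$. Applying the argument of the preceding proposition inside the outermost Clifton graph at level $n$, rule (KS.1) forces $v(a_i^{(n)}) = 0$ for $i \in \{1,2,3,4\}$ and then (KS.2) forces $v(a_5^{(n)}) = v(a_6^{(n)}) = 1$. By the identifications $a_5^{(n)} = a_0^{(n-1)}$ and $a_6^{(n)} = a_7^{(n-1)}$, this violates the inductive hypothesis for $\mathcal{G}_{n-1}$. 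The only change from the $n=1$ argument is that the final contradiction, which there came from $a_5 \perp a_6$ and (KS.1), is now supplied by the inductive hypothesis; note that the inner pair need not actually be orthogonal for the argument to go through.

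For the angular bound, the strategy is to upgrade the base case into a recursion. I would introduce the \emph{angle transfer function}: for a single Clifton graph in which the pair $(a_5, a_6)$ is constrained to make angle $\phi$ rather than being orthogonal, let $\psi(\phi)$ denote the maximum angle achievable between $(a_0, a_7)$. To compute $\psi(\phi)$, fix coordinates so that $a_1 = (1,0,0)$, use the orthogonality relations of the Clifton graph to parameterize the frames $\{a_1, a_3, a_5\}$ and $\{a_2, a_4, a_6\}$ by residual rotation angles, and then maximize $a_0 \cdot a_7$ subject to the single nonlinear constraint $a_5 \cdot a_6 = \cos\phi$. Elementary optimization is expected to yield the closed form
\[
\cos\psi(\phi) \;=\; \frac{1 + \cos\phi}{3 - \cos\phi}\,,
\]
with $\cos\psi(\pi/2) = 1/3$ recovering the base case. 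Taking $\phi = \theta_{n-1}$ with $\cos\theta_{n-1} = (n-1)/(n+1)$ (inductive hypothesis) gives
\[
\cos\theta_n \;=\; \frac{1 + (n-1)/(n+1)}{3 - (n-1)/(n+1)} \;=\; \frac{2n}{2n+4} \;=\; \frac{n}{n+2}\,,
\]
closing the induction. Attainment is witnessed at each step by the optimizing configuration.

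The main obstacle is the derivation of the angle transfer function. Even after fixing one direction, the Clifton graph's orthogonality structure leaves several angular degrees of freedom, so the parameterization must be chosen carefully, and some care is needed to verify that the extremum picked out by Lagrange multipliers is a genuine global maximum respecting all geometric constraints, rather than a saddle or a degenerate configuration with coincident directions.
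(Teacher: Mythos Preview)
The paper does not supply its own proof of this proposition: it simply refers the reader to Ref.~\cite{ramanathan2020gadget}. So there is nothing in the paper to compare your argument against line by line.

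Your inductive argument for the colouring claim is correct and complete; it is exactly the intended extension of the $n=1$ case, with the role of the orthogonality $a_5\perp a_6$ taken over by the inductive hypothesis for $\mathcal{G}_{n-1}$, as you note. This is the standard way the result is proved and is almost certainly what the cited reference does.

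For the angular bound, your plan is the right shape: reduce to a one-step ``angle transfer'' $\phi\mapsto\psi(\phi)$ for a single Clifton layer with the $(a_5,a_6)$ edge relaxed to angle~$\phi$, then iterate. Your claimed closed form $\cos\psi=(1+\cos\phi)/(3-\cos\phi)$ is consistent with the stated answer (it reproduces $\arccos(1/3)$ at $\phi=\pi/2$ and iterates to $n/(n+2)$), and you correctly observe that $\psi$ is monotone in $\phi$, so the maximum over the inner configuration is attained at the inductive extreme. What remains genuinely open in your write-up is exactly what you flag: actually \emph{deriving} that transfer formula from the orthogonality constraints and certifying that the Lagrange critical point is the global maximum among geometrically valid (non-degenerate, distinct-direction) Clifton configurations. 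That is the only substantive gap, and you have identified it yourself; the rest of the argument would close once that lemma is in hand.
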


\begin{figure}[ht]
  \centering
  \includegraphics[width=.7\linewidth]{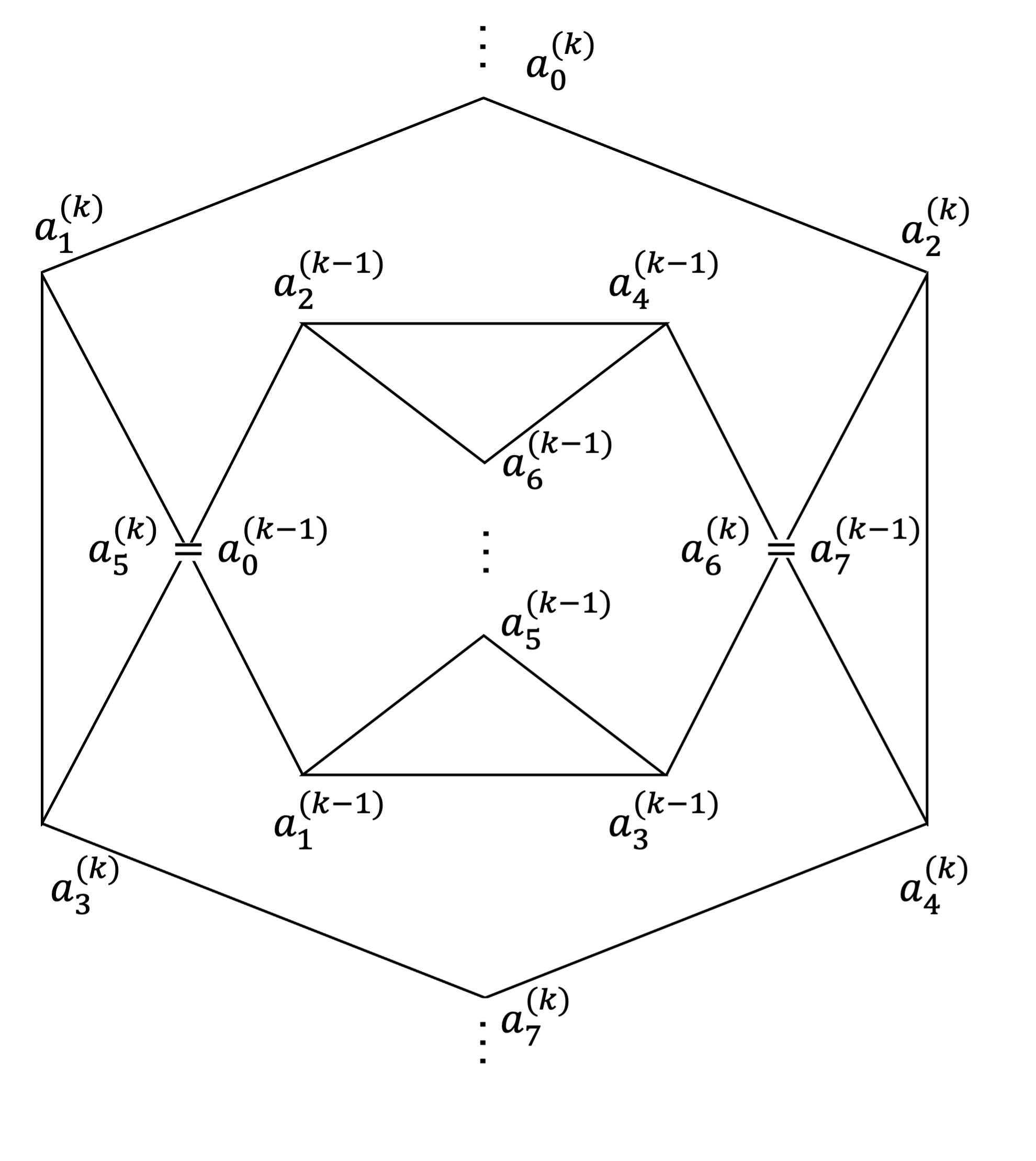}
  \caption{Nested Clifton graph, where [...] indicate further iterations.}
  \label{fig:3a}
\end{figure}

\begin{figure}[ht]
  \centering
  \includegraphics[width=.7\linewidth]{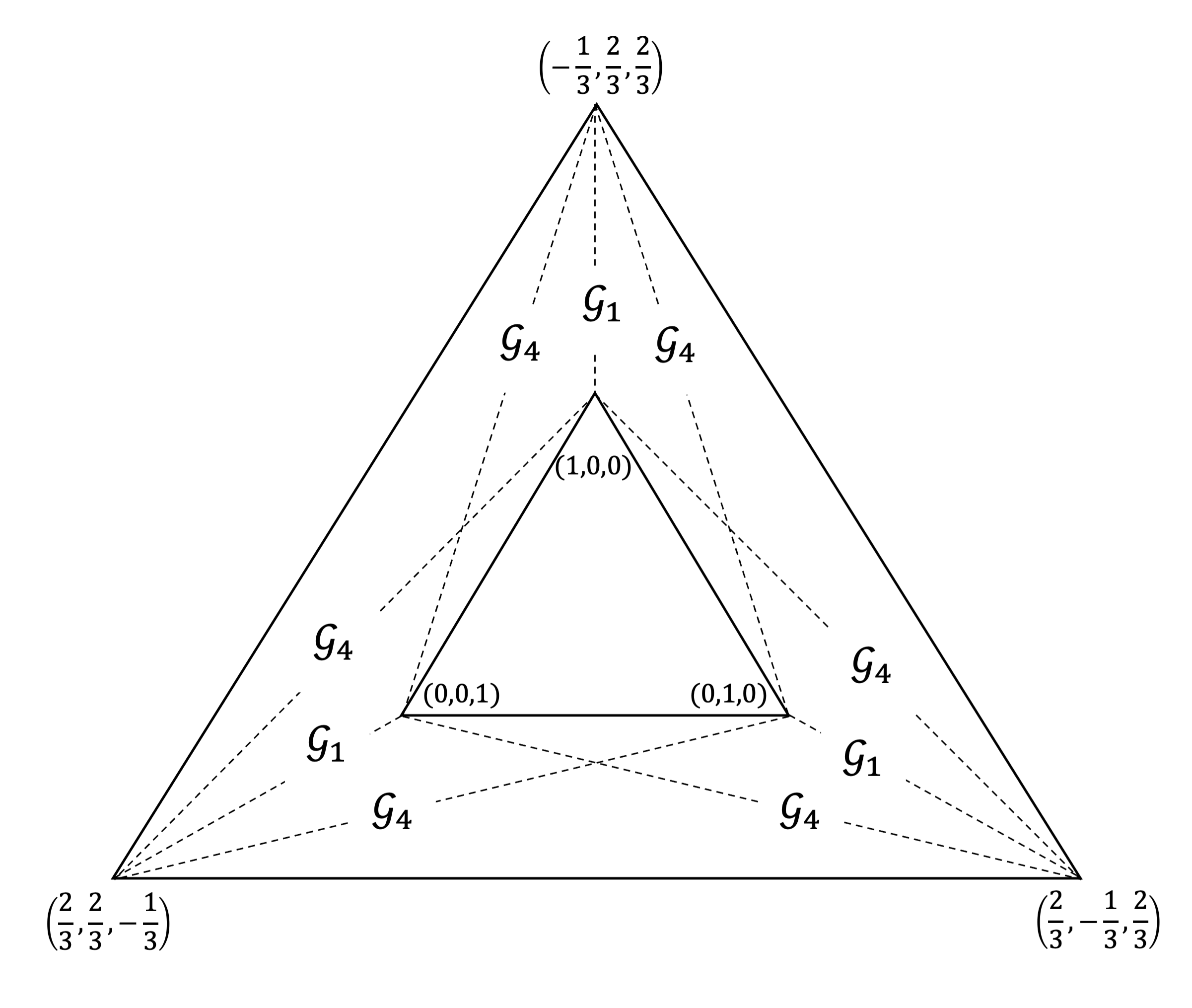}
  \caption{A 168-direction KS set. Solid lines join orthogonal directions. Dashed lines represent nested Clifton graphs.}
  \label{fig:3b}
\end{figure}

Using three ${\mathcal{G}}_{1}$ graphs and six ${\mathcal{G}}_{4}$ graphs, we construct the 168 direction KS set shown in \hyperref[fig:3b]{Figure 4}. The idea is simple: consider two orthogonal frames (triangles) and attach a nested Clifton graph between each direction (vertex) and the three directions of the other triangle. One of the three directions of a given frame must have outcome 1 by (KS.2), which in turn guarantees that the three directions of the other frame have outcome 0. This contradicts (KS.2) for the other frame, hence the set is~KS.

\section{Maximal Non-KS Sets}
Much of the literature on Kochen-Specker systems has focused on identifying minimal KS sets in three dimensions\footnote{The reason for which dimension three is special is two-fold. Every three-dimensional KS set can also be seen as a higher-dimensional one. Moreover, in dimensions four and higher, the size of the smallest KS set has already been found~\cite{pavivcic2005kochen}. In dimension three, the same question remains open, currently with a sizeable gap between Uijlen and Westerbaan's lower bound of 22 and Conway's graph of 31 directions.} \cite{arends2011searching,uijlen2016kochen,conway2006free}: Peres found a set with 33 directions \cite{A_Peres_1991}, and the smallest set to date consisting of 31 directions was found by Conway \cite{conway2006free}. By exhaustively constructing KS graphs and testing for topological embeddability, Uijlen and Westerbaan found a lower bound of 22 directions in three dimensions \cite{uijlen2016kochen}. The opposite question of finding maximal non-KS sets has not been explored. We do this here, starting with the definition of a non-KS set.

\subsection{Non-KS Sets}
\begin{definition}
A non-KS set in $d$-dimensions is a measurable subset $A \subseteq {S}^{d-1}$ that admits a valuation map $v:A \to \{0,1\}$ such that

\begin{enumerate}[(NKS.1)]
  \item $v(-n)=v(n)$ for all directions $n$ in $A$.
  \item $\sum\limits_{i \in I}v({n}_{i}) \leq 1$ for all sets of mutually orthogonal directions ${\{{n}_{i}\}}_{i \in I}$ in $A$.
  \item $\sum\limits_{i \in I} v({n}_{i}) = 1$ for all sets of $d$ mutually orthogonal directions ${\{{n}_{i}\}}_{i \in I}$ in $A$.
\end{enumerate}
\end{definition}

\textbf{Problem.} Find a non-KS set $A \subseteq {S}^{d-1}$ with maximal measure.

It is clear that any maximal non-KS set must be antipodally symmetric from the following theorem.

\begin{prop}
\label{Proposition 3.1.} The union of any non-KS set with its antipodal set is also a non-KS set.
\end{prop}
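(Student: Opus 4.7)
The plan is to extend the given valuation on a non-KS set $A$ to the union $A \cup (-A)$ by antipodal symmetry and verify the three axioms one by one, exploiting the fact that flipping the sign of a vector preserves orthogonality.

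Let $v: A \to \{0,1\}$ be the valuation witnessing that $A$ is a non-KS set. I would define $v': A \cup (-A) \to \{0,1\}$ by setting $v'(n) = v(n)$ when $n \in A$ and $v'(n) = v(-n)$ when $n \in -A$. The first step is to check that $v'$ is well-defined on the overlap $A \cap (-A)$: for any $n$ in this overlap both $n$ and $-n$ lie in $A$, and axiom (NKS.1) for $v$ gives $v(n) = v(-n)$, so the two prescriptions agree. Axiom (NKS.1) for $v'$ is then immediate from the definition, since for any $n \in A \cup (-A)$ both $v'(n)$ and $v'(-n)$ unwind to $v$ evaluated at the representative lying in $A$.

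To verify (NKS.2) and (NKS.3), I would take an arbitrary set of mutually orthogonal directions $\{n_i\}_{i \in I} \subseteq A \cup (-A)$ and define its \emph{lift to $A$} by $n'_i = n_i$ if $n_i \in A$ and $n'_i = -n_i$ otherwise. Orthogonality is invariant under negating individual vectors, so $\{n'_i\}_{i \in I}$ is a mutually orthogonal family lying entirely in $A$. By construction $v(n'_i) = v'(n_i)$ for every $i$, so the inequalities and equalities required of $v'$ on the frame $\{n_i\}$ are exactly those that (NKS.2) and (NKS.3) already guarantee for $v$ on the frame $\{n'_i\}$. In particular, if $|I| = d$, the lifted family is a full orthonormal frame in $A$, so $\sum_i v'(n_i) = \sum_i v(n'_i) = 1$.

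There is no real obstacle here; the only subtlety is handling the overlap $A \cap (-A)$ in the definition of $v'$, which is resolved immediately by the antipodal symmetry already assumed on $A$. The rest of the argument amounts to the observation that $\{0,1\}$-colourings of antipodal pairs behave symmetrically and that orthogonality is stable under sign flips, so every orthogonality constraint on $A \cup (-A)$ pulls back to an orthogonality constraint already verified on $A$.
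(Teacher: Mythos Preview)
Your proposal is correct and follows essentially the same approach as the paper: define the extension $v'$ by antipodal symmetry and note that the three axioms carry over. The paper sidesteps the well-definedness check by defining $v'$ on $A$ and on $-A\setminus A$ separately, and then simply asserts that the three conditions are ``trivially'' satisfied; your version is more explicit but not substantively different.
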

\begin{proof}
 Consider a non-KS set $A$ with some valuation map $v:A \to \{0,1\}$. Denote by $-A :=\{-n \in {S}^{d-1}:n \in A\}$ the antipodal set of $A$. Consider the map
\begin{align*}
  v':A \cup -A &\to \{0,1\} \\
  n &\mapsto \begin{cases}
                v(n) &n \in A \\
                v(-n) &n \in -A \backslash A
             \end{cases}
\end{align*}
This map trivially satisfies the three conditions in the definition of non-KS sets.
\end{proof}

In dimension $d=3$ the problem is equivalent to finding the largest subset such that no two orthogonal directions have value 1 and no three mutually orthogonal directions have value 0. It is instructive to first consider the simpler problems of identifying:
\begin{enumerate}[i)]
  \item the largest subset containing no two orthogonal directions;
  \item the largest subset containing no three mutually orthogonal directions.
\end{enumerate}

The first of these problems was posed by Witsenhausen in 1974 \cite{witsenhausen1974spherical}. The problem has not been solved but the following conjecture by Kalai and Wilson \cite{kalai2009large} has not yet been disproved.

\begin{conj}
The maximal measure for a subset of~$S^2$ containing no two orthogonal directions is attained by two open caps of geodesic radius $\pi/4$ around the south and north poles.
\end{conj}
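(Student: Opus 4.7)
The admissibility of the candidate is immediate. If $p,q$ both lie in the open polar cap $C_N=\{n\in S^2:\angle(n,\hat z)<\pi/4\}$, the spherical triangle inequality gives $\angle(p,q)<\pi/2$; if $p\in C_N$ and $q\in C_S$ (the south cap) then $\angle(p,q)>\pi-2\cdot\pi/4=\pi/2$; and openness excludes equality. The two caps cover a fraction $1-\cos(\pi/4)=1-\tfrac{\sqrt{2}}{2}\approx 0.293$ of the sphere, matching the claim.

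For the upper bound I would attempt a symmetry reduction. The antipodal argument of \autoref{Proposition 3.1.} transfers verbatim (since $n\perp m\iff -n\perp m$), so we may restrict attention to antipodally symmetric $A$. The crucial step is then to symmetrize $A$ in the azimuthal coordinate about some axis, producing an axially symmetric admissible set $A^{\ast}$ with $\mu(A^{\ast})\ge\mu(A)$. Once axial symmetry is granted, $A$ is determined by a measurable $B\subseteq[0,\pi/2]$ (after antipodal folding), and the orthogonality relation $\cos\theta_1\cos\theta_2+\sin\theta_1\sin\theta_2\cos\Delta\phi=0$ is solvable for $\Delta\phi$ precisely when $\theta_1+\theta_2\ge\pi/2$. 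Hence $B$ is admissible iff $\sup B\le\pi/4$ (up to a null set), and maximizing the corresponding area $\int_B\sin\theta\,d\theta$ uniquely selects $B=[0,\pi/4)$, i.e.\ the two polar caps.

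The critical difficulty is the symmetrization step: a slice-by-slice azimuthal rearrangement can create orthogonal pairs between different latitudes that were safely non-orthogonal in $A$, so standard Steiner-type tools do not apply directly. I would pursue (i) an iterated two-point polarization with respect to great circles, set up so that any orthogonal pair introduced at one latitude pair is matched by one destroyed at another, and (ii) a compactness argument on the $SO(3)$-orbit of $\mathbf{1}_A$ in $L^1(S^2)$, aiming to extract a limit point invariant under a one-parameter subgroup. The persistence of the conjecture since \cite{kalai2009large} reflects a fundamental obstacle: no symmetrization on $S^2$ is known to respect a single-angle binary constraint of this form, and it is conceivable that genuinely non-axially-symmetric extremizers exist. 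A fallback strategy is a Delsarte-style linear-programming bound: find a zonal, positive semidefinite kernel on $S^2\times S^2$ that is non-positive on non-orthogonal angles, so as to sharpen Witsenhausen's classical bound $\mu(A)\le\tfrac{1}{3}\mu(S^2)$ (obtained by averaging over orthonormal triples) toward the conjectured value $1-\tfrac{\sqrt 2}{2}$.
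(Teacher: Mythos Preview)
The statement is labeled a \emph{conjecture} in the paper and is presented without proof; the paper explicitly records it as open, noting only the DeCorte--Pikhurko upper bound of $0.313$. There is therefore no paper proof to compare against.

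Your proposal correctly handles the easy direction: the double cap is admissible and has the stated measure. For the hard direction you outline a reduction to axially symmetric sets, but the step you yourself flag as ``critical'' is a genuine gap, not a technicality. Azimuthal rearrangement at each latitude does not preserve the no-orthogonal-pair constraint: two points at colatitudes $\theta_1,\theta_2$ with $\theta_1+\theta_2\ge\pi/2$ may be non-orthogonal in $A$ (for a suitable $\Delta\phi$) yet land at an orthogonal configuration after rearrangement, so the implication ``$A$ admissible $\Rightarrow A^\ast$ admissible'' simply fails. Your fallback suggestions---iterated two-point polarization, compactness on the $SO(3)$-orbit, a Delsarte-type LP bound---are plausible research directions, but none is actually executed, and none is currently known to close the gap between $0.313$ and $1-1/\sqrt{2}$. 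What you have written is a survey of possible attacks on an open problem rather than a proof; the conjecture remains unproved both in the paper and in your proposal.
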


The conjecture would imply that the measure is at most $1-1/\sqrt{2}$ as a fraction of the measure of the total sphere. DeCorte and Pikhurko \cite{decorte2016spherical} have proven that the maximal measure is at most $0.313$.

The second problem is an interesting variant of the first and is discussed by A. Perez \cite{perez2019generalizations}. In the next subsection we construct a large set with no three mutually orthogonal directions and conjecture that it is the largest such set.

\subsection{Constructing a Large Non-KS Set}

In order to construct a subset with no three mutually orthogonal directions, we begin with the union of two double caps. Since a double cap cannot contain two orthogonal directions, by the pigeonhole principle the union of two double caps cannot contain three mutually orthogonal directions (as one would have to contain two).

Without loss of generality, fix the poles of the double caps to be the positive and negative $x$ and $y$ axes. In terms of Cartesian coordinates, the boundaries of two of the caps on the upper (positive $z$) hemisphere have the following parametrizations:
\begin{equation}
    {\mathbf{r}}_{1}(t)=\left(\frac{1}{\sqrt{2}}, t,\sqrt{\frac{1}{2}-{t}^{2}}\right),
\end{equation}
\begin{equation}
    {\mathbf{r}}_{2}(t)=\left(u(t),\frac{1}{\sqrt{2}},\sqrt{\frac{1}{2}-{u(t)}^{2}}\right).
\end{equation}
Define the function $u(t)$ such that ${\mathbf{r}}_{1}(t)$ and ${\mathbf{r}}_{2}(t)$ are orthogonal:
\begin{equation}
    u(t)=-\frac{\frac{1}{2}+t}{1+t}~.
\end{equation}
Substituting into ${\mathbf{r}}_{2}(t)$ gives
\begin{equation}
    {\mathbf{r}}_{2}(t)=\left(-\frac{\frac{1}{2}+t}{1+t},\frac{1}{\sqrt{2}},\frac{1}{\sqrt{2}}\frac{\sqrt{\frac{1}{2}-{t}^{2}}}{1+t}\right).
\end{equation}
The third mutually orthogonal direction on the upper hemisphere is given by
\begin{equation}
    \begin{split}
        {\mathbf{r}}_{3}(t) &= {\mathbf{r}}_{1}(t) \times {\mathbf{r}}_{2}(t) \\
        &= \left(-\frac{1}{\sqrt{2}}\frac{\sqrt{\frac{1}{2}-{t}^{2}}}{1+t},-\sqrt{\frac{1}{2}-{t}^{2}},\frac{\frac{1}{2}+t+{t}^{2}}{1+t}\right)
    \end{split}
\end{equation}
The three parametrizations ${\mathbf{r}}_{1}(t),{\mathbf{r}}_{2}(t),{\mathbf{r}}_{3}(t)$  on the domain $t \in \left[-\frac{1}{2},0\right)$ together with their $\pi/2$ rotations about the $z$ axis define the boundary of a closed region $A \subseteq {S}^{2}$ which contains the north pole. We can parameterize the entire boundary through the function ${\mathbf{R}}(t)$ defined as follows. Let
\begin{equation}
    R_{\frac{\pi}{2}}:=
    \begin{pmatrix}
        0 &-1 &0\\
        1 &0 &0\\
        0 &0 &1
    \end{pmatrix}
\end{equation}
denote the rotation by $\pi/2$ about the $z$ axis. Using this, define
\begin{align}
  {\mathbf{r}}_0:\left[0,1/4\right) &\to S^2 \nonumber \\
  t &\mapsto \begin{cases}
                {\mathbf{r}}_1(6t-\frac{1}{2}) &t \in [\,0,\frac{1}{12}) \\
                R_{\frac{\pi}{2}}^3 {\mathbf{r}}_2(6t-1) &t \in [\, \frac{1}{12},\frac{1}{6}) \\
                R_{\frac{\pi}{2}}^2 {\mathbf{r}}_3(6t-\frac{3}{2}) &t \in [\, \frac{1}{6},\frac{1}{4})
             \end{cases}
\end{align}
and further,
\begin{align}
  {\mathbf{R}}:[\,0,1) &\to S^2 \nonumber \\
  t &\mapsto \begin{cases}
                {\mathbf{r}}_0(t) &t \in [\, 0,\frac{1}{4}) \\
                R_{\frac{\pi}{2}} {\mathbf{r}}_0(t-\frac{1}{4}) &t \in [\, \frac{1}{4},\frac{1}{2}) \\
                R_{\frac{\pi}{2}}^2 {\mathbf{r}}_0(t-\frac{1}{2}) &t \in [\, \frac{1}{2},\frac{3}{4}) \\
                R_{\frac{\pi}{2}}^3 {\mathbf{r}}_0(t-\frac{3}{4}) &t \in [\, \frac{3}{4},1)
             \end{cases}~.
\end{align}

It is convenient to extend the domain of $\mathbf{R}$ to the reals as $\forall t\in \mathbb{R}:\mathbf{R}(t)\equiv \mathbf{R}(t-\lfloor t \rfloor)$, making $\mathbf{R}$ a periodic function of period $1$.

\begin{lem}
\label{Lemma 3.2.} The following result holds:
\begin{equation}
    \forall t \in \mathbb{R}:\forall s \in \left[\, t+\frac{1}{3}, t+\frac{2}{3}\right):\textbf{R}(t)\cdot \textbf{R}(s)\leq 0
\end{equation}
\end{lem}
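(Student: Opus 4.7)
The plan is to exploit two structural features of $\mathbf{R}$: the $4$-fold rotational symmetry $\mathbf{R}(t+1/4)=R_{\pi/2}\mathbf{R}(t)$ and the mutual orthogonality of $(\mathbf{r}_1(\tau),\mathbf{r}_2(\tau),\mathbf{r}_3(\tau))$ at each local parameter $\tau\in[-1/2,0)$. Since $R_{\pi/2}$ is an isometry, the claim is invariant under $t\mapsto t+1/4$, so WLOG $t\in[0,1/4)$, i.e.\ $t$ lies in one of the pieces of $\mathbf{R}$ indexed $0$, $1$, or $2$.

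The first concrete step is a ``triad identity'': for every $t$, the triple $(\mathbf{R}(t),\mathbf{R}(t+1/3),\mathbf{R}(t+2/3))$ is an orthonormal frame. Writing $t=k/12+(\tau+1/2)/6$ with $k$ the piece containing $t$, unfolding the definition of $\mathbf{R}$ yields $\mathbf{R}(t)=R_{\pi/2}^{a(k)}\mathbf{r}_{b(k)}(\tau)$ with $a(k)=(-k)\bmod 4$ and $b(k)=(k\bmod 3)+1$. The time shift by $1/3$ corresponds to the piece shift $k\mapsto k+4$, which leaves the rotation fixed ($a(k+4)=a(k)$) and cyclically advances the arc index ($b(k+4)=(b(k)\bmod 3)+1$). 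Thus $\mathbf{R}(t)$, $\mathbf{R}(t+1/3)$, $\mathbf{R}(t+2/3)$ are obtained by applying the common rotation $R_{\pi/2}^{a(k)}$ to $(\mathbf{r}_{b(k)}(\tau),\mathbf{r}_{b(k)+1}(\tau),\mathbf{r}_{b(k)+2}(\tau))$ (indices cyclic mod~$3$), and hence form an orthonormal triad. In particular $\mathbf{R}(t)\cdot\mathbf{R}(t+1/3)=\mathbf{R}(t)\cdot\mathbf{R}(t+2/3)=0$, settling the endpoints of the interval.

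For $s$ strictly inside $(t+1/3,t+2/3)$, I would split into subcases by the piece $k_s$ containing $s$. Given $k_t\in\{0,1,2\}$, the index $k_s$ ranges over at most five values in $\{k_t+4,\dots,k_t+8\}$, and the cyclic relation above reduces the number of essentially different pairs that must be treated. In each subcase one writes $\mathbf{R}(t)\cdot\mathbf{R}(s)$ explicitly as a function of $(\tau_t,\tau_s)$ via the formulas for $\mathbf{r}_1,\mathbf{r}_2,\mathbf{r}_3$ and $R_{\pi/2}$. When $s$ lies on an $\mathbf{r}_1$- or $\mathbf{r}_2$-type piece the image traces an arc of a small circle (the boundary of one of the four caps), so $\mathbf{R}(t)\cdot\mathbf{R}(s)$ is a shifted sinusoid in $\tau_s$ whose non-positivity on the relevant subinterval follows from its values at the subinterval endpoints (known from the triad identity when they coincide with $t+1/3$ or $t+2/3$, and by continuity across piece boundaries otherwise).

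The main obstacle is the subcases in which $\mathbf{R}(s)$ traces an $\mathbf{r}_3$-type arc, which is not part of any cap boundary and hence is not a small circle on $S^2$. There the inner product is a genuine rational function of $(\tau_t,\tau_s)$ with positive denominator $1+\tau_s$; the plan is to clear that denominator, use the triad identity to factor out the linear factor that vanishes at the relevant endpoint of the subinterval (e.g.\ $\tau_t-\tau_s$ when $s=t+2/3$ is the adjacent endpoint), and verify the sign of the remaining low-degree polynomial on the parameter rectangle, either by elementary estimates or by a monotonicity argument in $\tau_s$.
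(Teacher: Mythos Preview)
The paper's own proof is a single sentence: ``The inequality follows from standard maximization of $\mathbf{R}(t)\cdot\mathbf{R}(s)$.'' No reduction, no case split, nothing. Your proposal is therefore not so much a variant of the paper's argument as the first place any actual argument appears. Your structural observations are correct and worth recording: the indexing $\mathbf{R}|_{\text{piece }k}=R_{\pi/2}^{a(k)}\mathbf{r}_{b(k)}(\tau)$ with $a(k)=(-k)\bmod 4$, $b(k)=(k\bmod 3)+1$ matches the piecewise definition, and the triad identity follows exactly as you say (the shift $k\mapsto k+4$ fixes $a$, cycles $b$, and leaves the local parameter $\tau$ unchanged). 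That triad is precisely what the paper exploits in Lemma~\ref{Lemma 3.3.} to build the rotating octant, so isolating it here is a genuine gain over the paper's one-liner.

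One step in your plan does not hold as stated. You claim that on an $\mathbf{r}_1$- or $\mathbf{r}_2$-type piece the inner product $\mathbf{R}(t)\cdot\mathbf{R}(s)$ is a shifted sinusoid whose non-positivity ``follows from its values at the subinterval endpoints.'' Each cap-arc piece spans an angular interval of length $\pi/4$ in the natural angle on the small circle, and a function $C+D\sin(\theta+\phi)$ can be negative at both ends of such an interval yet positive at an interior maximum (e.g.\ $-0.95+\cos(\theta+\pi/8)$ on $[-\pi/4,0]$). Endpoint values alone are therefore not enough; you must either locate the (at most one) interior critical point and bound the value there, or use the triad identity once more to fix the sign of the derivative at the endpoint $s=t+\tfrac{1}{3}$ where the inner product vanishes. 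With that patched, and the $\mathbf{r}_3$-type rational cases handled by the clear-denominator-and-factor strategy you describe, the plan is sound and is a concrete realisation of what the paper leaves as ``standard maximization.''
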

\begin{proof}
    The inequality follows from standard maximization of $\textbf{R}(t)\cdot \textbf{R}(s)$.
\end{proof}

\begin{lem}
\label{Lemma 3.3.} Consider the open octant $O(t)$ defined by the mutually orthogonal points $\mathbf{R}(t)$, $\mathbf{R}(t+\frac{1}{3})$ and $\mathbf{R}(t+\frac{2}{3})$ for some $t\in\mathbb{R}$:
\begin{equation}
    O(t) := \{\mathbf{r}\in S^2|\forall s\in\{0,\frac{1}{3},\frac{2}{3}\}:\mathbf{r}\cdot\mathbf{R}(t+s)>0\}.
\end{equation}
Then $\overline{O(t)}\subset A$.
\end{lem}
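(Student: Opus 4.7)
My plan is to reduce the statement to a one-parameter family of octants by exploiting the symmetries of the construction, and then verify the containment directly on the three great-circle edges of the octant.

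First, because $A$ is invariant under $R_{\pi/2}$ (being built from $R_{\pi/2}$-rotated arcs) and $\mathbf{R}(t+1/4)=R_{\pi/2}\mathbf{R}(t)$, one has $\overline{O(t+1/4)}=R_{\pi/2}\,\overline{O(t)}$, so $\overline{O(t+1/4)}\subset A$ iff $\overline{O(t)}\subset A$. Moreover, $O(t)$ depends only on the unordered triple $\{\mathbf{R}(t),\mathbf{R}(t+1/3),\mathbf{R}(t+2/3)\}$; the $1$-periodicity of $\mathbf{R}$ makes this triple invariant under $t\mapsto t+1/3$, so $O$ has period $1/3$. Combining these two symmetries reduces the statement to the case $t\in[0,1/12)$.

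For such $t$, set $s=6t-1/2\in[-1/2,0)$. Direct substitution in the piecewise definition of $\mathbf{R}$ (using $R_{\pi/2}^4=\mathbb{1}$) gives $(\mathbf{R}(t),\mathbf{R}(t+1/3),\mathbf{R}(t+2/3))=(\mathbf{r}_1(s),\mathbf{r}_2(s),\mathbf{r}_3(s))$, which is orthonormal by construction. The north pole lies in $O(t)$ since $(0,0,1)\cdot\mathbf{r}_i(s)>0$ for $i=1,2,3$; since $(0,0,1)$ lies in the interior of $A$ and $\overline{O(t)}$ is simply connected, it suffices to show that the three great-circle edges of $\overline{O(t)}$ lie in $\overline{A}$, for then connectedness forces the whole closed octant into $A$.

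The three edges can be checked using the explicit formulas for $\mathbf{r}_1,\mathbf{r}_2,\mathbf{r}_3$ and the $12$-arc description of $\partial A$. On the edge from $\mathbf{r}_1(s)$ to $\mathbf{r}_2(s)$, parameterised as $\cos\theta\,\mathbf{r}_1(s)+\sin\theta\,\mathbf{r}_2(s)$, the $x$-coordinate decreases monotonically from $1/\sqrt 2$ to $u(s)<0$ and the $y$-coordinate increases monotonically from $s$ to $1/\sqrt 2$, so the edge never re-enters any of the $\pm\hat x,\pm\hat y$ caps; meanwhile its $z$-coordinate is bounded below by $\min\bigl(\sqrt{1/2-s^2},\sqrt{1/2-u(s)^2}\bigr)\geq 1/2$, which strictly exceeds the maximum $z$-value (equal to $1/2$) attained on the four $\mathbf{r}_3$-type connecting arcs of $\partial A$, preventing intersection. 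The edge from $\mathbf{r}_2(s)$ to $\mathbf{r}_3(s)$ is handled in the same way after a relabeling.

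The most delicate case is the edge from $\mathbf{r}_1(s)$ to $\mathbf{r}_3(s)$. A short calculation shows that its $y$-coordinate attains its global minimum of exactly $-1/\sqrt 2$ at the interior parameter $\theta^*$ with $\cos\theta^*=-s\sqrt 2$, and that the corresponding point on the edge is precisely $\bigl(u(s),-1/\sqrt 2,\sqrt{1/2-u(s)^2}\bigr)$, which sits exactly on the $R^3\mathbf{r}_1$ arc of $\partial A$. Hence the edge is tangent to the $-\hat y$-cap boundary (and to $\partial A$) from the $A$-side without crossing. This tangential coincidence is the crux of the proof: it is a non-obvious algebraic identity depending on the exact form $u(s)=-(1/2+s)/(1+s)$ that makes the construction of $A$ tight around the moving octants. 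With all three edges shown to lie in $\overline{A}$, the lemma follows from the connectedness argument above; my main obstacle is verifying this last tangency cleanly, which is what pins the entire construction together.
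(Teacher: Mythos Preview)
Your approach is genuinely different from the paper's, and it contains a real logical gap.

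The paper does not analyse the edges of the octant at all. Instead it shows directly that $\partial A\cap O(t)=\emptyset$: for every $u$, Lemma~\ref{Lemma 3.2.} gives $\mathbf R(u)\cdot\mathbf R(t+s)\le 0$ for some $s\in\{0,\tfrac13,\tfrac23\}$, so $\mathbf R(u)\notin O(t)$. Connectedness of $O(t)$ together with $(0,0,1)\in O(t)\cap\text{int}(A)$ then yields $O(t)\subset\text{int}(A)$, hence $\overline{O(t)}\subset\overline A=A$. The whole argument is two lines once Lemma~\ref{Lemma 3.2.} is available.

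Your reduction by the $R_{\pi/2}$ and period-$\tfrac13$ symmetries is correct and your tangency computation on the $\mathbf r_1$--$\mathbf r_3$ edge is right (and is a nice observation). The gap is the implication you rely on: ``the three edges lie in $\overline A$, so connectedness forces $\overline{O(t)}\subset A$.'' This is false in general. Knowing $\partial O(t)\subset A$ does \emph{not} prevent $\partial A$ from dipping into the open octant $O(t)$; indeed $\partial A$ already meets $\partial O(t)$ at the three vertices and at your tangent point, and nothing you have written rules out $\partial A$ entering $O(t)$ near those contacts. What connectedness actually needs is $O(t)\cap\partial A=\emptyset$, which is precisely the paper's statement, and it is not a consequence of edge containment alone. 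You could rescue your route by proving in addition that $A$ is geodesically star-shaped from the north pole (equivalently, that each meridian meets $\partial A$ exactly once in the upper hemisphere); then ``edges in $A$'' plus ``north pole in $O(t)\cap\text{int}(A)$'' does imply $\overline{O(t)}\subset A$. But that extra fact is not stated or proved in your argument, and without it the connectedness step does not go through.

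A secondary issue: your edge checks are plausible but incomplete. For the $\mathbf r_1$--$\mathbf r_3$ edge you carefully handle the $-\hat y$ cap tangency, but you do not verify that this edge avoids the four $\mathbf r_3$-type connecting arcs of $\partial A$ (its $z$-coordinate can drop below $\tfrac12$, so your earlier $z\ge\tfrac12$ argument does not apply here). Each such case can be done, but the bookkeeping is substantial compared with the paper's single dot-product inequality.
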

\begin{proof}
    We begin by showing $O(t) \cap \partial(A)=\emptyset$. For a contradiction, assume that $\exists u\in [\, 0,1):\mathbf{R}(u)\in O(t)$:
    \begin{equation}
    \begin{aligned}
        u\in [\, t+\frac{1}{3},t+\frac{2}{3}) \Rightarrow \mathbf{R}(u)\cdot\mathbf{R}(t)\leq 0 \Rightarrow \mathbf{R}(u)\notin O(t) \\
        u\in [\, t+\frac{2}{3},t+1) \Rightarrow \mathbf{R}(u)\cdot\mathbf{R}(t+\frac{1}{3})\leq 0 \Rightarrow \mathbf{R}(u)\notin O(t) \\
        u\in [\, t+1,t+\frac{4}{3}) \Rightarrow \mathbf{R}(u)\cdot\mathbf{R}(t+\frac{2}{3})\leq 0 \Rightarrow \mathbf{R}(u)\notin O(t)
    \end{aligned}
    \end{equation}

We have shown that $\forall u\in [\, t+\frac{1}{3},t+\frac{4}{3}):\mathbf{R}(u)\notin O(t)$, which is sufficient since $\mathbf{R}$ is periodic with period $1$. 
It follows trivially that:
\begin{equation}
    O(t)\subset\text{int}(A)\cup\text{ext}(A).
\end{equation}
Since $O(t)$ is connected and $(0,0,1)\in O(t)\cap\text{int}(A)$ it follows that
\begin{equation}
    O(t)\subset\text{int}(A)\subset A
\end{equation}
and finally
\begin{equation}
    \overline{O(t)}\subset\overline{A}=A.
\end{equation}
\end{proof}

\begin{lem}
\label{Lemma 3.4.} If two antipodal closed octants can perform complete rotations in the exterior of a region $U\subset S^2$, then $U$ contains no three mutually orthogonal directions.
\end{lem}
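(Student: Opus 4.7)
The plan is to argue by contradiction: suppose $U$ contains three mutually orthogonal directions $n_1, n_2, n_3$, and exhibit a rotation that places one of the rotating antipodal octants precisely onto the closed octant
\[
O_0 := \{\mathbf{r} \in S^2 : \mathbf{r}\cdot n_i \geq 0 \text{ for } i=1,2,3\},
\]
whose three vertices are exactly $n_1, n_2, n_3$. Once this rotation is produced, the hypothesis immediately forces $n_1, n_2, n_3 \notin U$.

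The key ingredient is the transitivity of $SO(3)$ on the set of closed octants of $S^2$. Starting from a reference octant $O$ with vertex frame $(e_1, e_2, e_3)$, I would pick a permutation $\sigma \in S_3$ so that $(n_{\sigma(1)}, n_{\sigma(2)}, n_{\sigma(3)})$ is right-handed; the linear map $A$ sending $e_i \mapsto n_{\sigma(i)}$ then lies in $SO(3)$, and a short verification using the half-space description of an octant shows $A\cdot O = O_0$, since relabelling the vertices leaves an octant unchanged as a subset of $S^2$. The antipodal partner $-O$ in the hypothesis is there to absorb handedness cleanly: for one handedness of the frame $(n_1, n_2, n_3)$ the rotation sends $O$ onto $O_0$, and for the other it sends $-O$ onto $O_0$. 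Applying the hypothesis to the resulting $R \in SO(3)$ yields $O_0 \subset S^2\setminus U$, and hence $\{n_1, n_2, n_3\} \subset O_0 \cap U = \emptyset$, a contradiction.

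The main obstacle I anticipate lies not in the algebra above but in pinning down the precise meaning of \emph{``can perform complete rotations in the exterior''}. The natural reading, consistent with the moving-sofa analogy mentioned in the introduction, is that the pair of antipodal octants can be placed at \emph{every} $R \in SO(3)$ while remaining in $S^2\setminus U$; under this reading the sketch above goes through immediately. A strictly weaker reading — say, admissibility only along a single continuous loop in $SO(3)$ — would require an additional connectivity argument to ensure that the specific orientation constructed above is actually realised along such a loop, which I would handle by exploiting the connectedness of $SO(3)$ together with the continuity of the admissible motion.
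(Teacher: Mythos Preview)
Your proposed argument has a genuine gap, and it lies exactly where you suspected: in the meaning of ``complete rotations''. Your ``natural reading'' --- that the octant pair can be placed at \emph{every} $R\in SO(3)$ while remaining in $S^2\setminus U$ --- is far too strong: since $\bigcup_{R\in SO(3)} R\cdot O = S^2$, that hypothesis forces $U=\emptyset$ and renders the lemma vacuous. It is also not what the paper intends: in the application (Proposition~\ref{Proposition 3.5.}) one only has the one-parameter family $\overline{O(t)}\cup(-\overline{O(t)})\subset S^2\setminus N_0$ as $t$ runs over $[0,1)$, a single loop of positions, not all of $SO(3)$. The moving-sofa analogy actually points the same way: a sofa traverses a corridor along a one-parameter motion, it is not required to sit in every orientation.

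Under the correct weaker reading, your proposed fix --- ``exploit the connectedness of $SO(3)$ together with the continuity of the admissible motion'' --- does not work. The admissible motion is a $1$-dimensional loop in the $3$-dimensional group $SO(3)$; there is no reason for the particular rotation $R$ you construct (carrying the reference octant onto $O_0$) to lie on that loop, and connectedness of the ambient group says nothing about this. The paper's proof uses a genuinely different idea: it dualizes (fixes the octants $\mathcal O_\pm$ and lets $U$ move), observes that a \emph{complete} rotation forces every point of $U$ to cross a specific quarter-equator arc $L$ at some instant, and then checks that for $p\in L$ the set $p^\perp\cap\text{ext}(\mathcal O_\pm)$ is a pair of antipodal open quarter-circles containing no two orthogonal directions. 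Hence at the moment one member of a putative orthogonal triple in $U$ sits on $L$, the other two cannot both lie in $\text{ext}(\mathcal O_\pm)$ and be orthogonal to each other. This crossing argument is the missing ingredient.
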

\begin{proof}
    Rather than the octants rotating, we take the dual perspective of fixing the octants and rotating the region~$U$.
    Define the fixed antipodal closed octants of~$S^2$, shown in \autoref{fig:5}, to be
    \begin{equation}
        \begin{split}
            \mathcal{O}_\pm := \{(\text{sin}(\theta)\text{cos}(\phi),\text{sin}(\theta)\text{sin}(\phi),\text{cos}(\theta)) \in S^2|~~~~~~~~\\
            (\theta,\phi)\in[\, 0,\pi/2]\times [\, 0,\pi/2]\ \cup\, [\, \pi/2,\pi]\times [\, \pi,3\pi/2]\, \}~.
        \end{split}
    \end{equation}
    For any region $U$ to perform a complete rotation in $\text{ext}(\mathcal{O}_\pm)$, each point in $U$ must cross the line
    \begin{equation}
        L:=\{(-\text{sin}(\gamma),\text{cos}(\gamma),0)\in S^2|\gamma\in(0,\pi/2)\}~.
    \end{equation}
    The set of points orthogonal to $p{=}(-\text{sin}(\gamma),\text{cos}(\gamma),0){\in}L$ on $S^2$ are
    \begin{equation}
    \begin{split}
        p^\perp:=\{(\text{sin}(\theta)\text{cos}(\gamma),\text{sin}(\theta)\text{sin}(\gamma),\text{cos}(\theta))\in S^2|& \\
        \phi\in [0,2\pi)\}&
    \end{split}
    \end{equation}
    and the set of such points contained in $\text{ext}(\mathcal{O}_\pm)$ are
    \begin{align}
    \begin{split}
        p^\perp\cap \text{ext}(\mathcal{O}_\pm)=\{(\text{sin}(\theta)\text{cos}(\gamma),\text{sin}(\theta)\text{sin}(\gamma),\text{cos}(\theta))\in S^2|& \\
        \theta\in (\frac{\pi}{2},\pi)\cup(\frac{3\pi}{2},2\pi)\}~,~~~~~~~~~&
    \end{split}
    \end{align}
    which contains no two orthogonal points. Hence, no point which crosses $L$ can be part of a mutually orthogonal set of three points in $U$.
\end{proof}

\begin{figure}[h!]
    \centering
    \includegraphics[width=.55\linewidth]{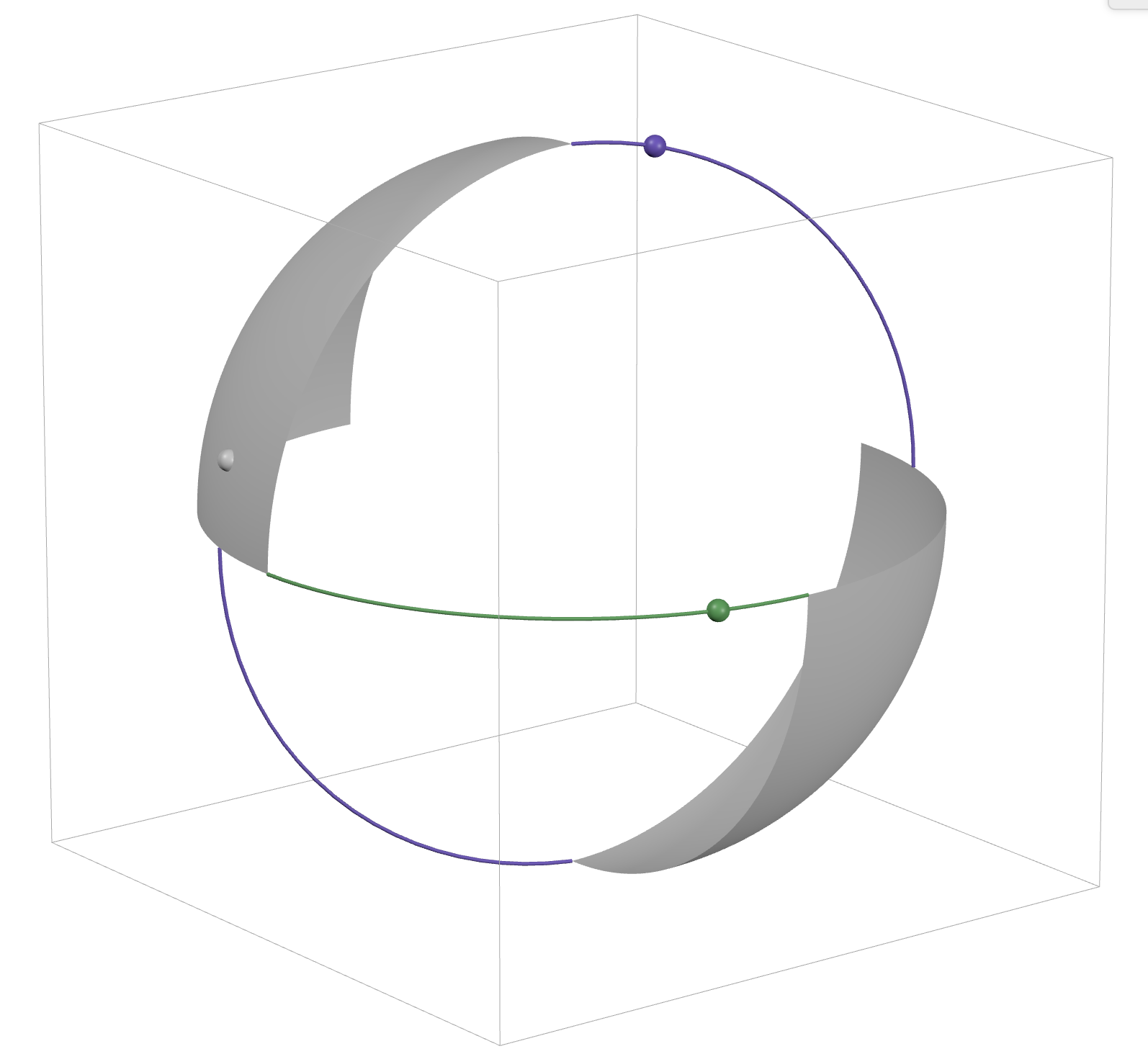}
    \caption{The grey shaded region depicts $\mathcal{O}_\pm$. The green line and point depict $L$ and $p$ respectively. The purple line depicts $p^\perp\cap \text{ext}(\mathcal{O}_\pm)$ and the purple and grey points depict two orthogonal points which cannot lie within $p^\perp\cap \text{ext}(\mathcal{O}_\pm)$.}
    \label{fig:5}
\end{figure}

\begin{prop}
\label{Proposition 3.5.} The region ${{N}_{0}}:={S}^2 \backslash (A \cup -A)$ contains no three mutually orthogonal directions.
\end{prop}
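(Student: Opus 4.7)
The strategy is to invoke \autoref{Lemma 3.4.} with $U := N_0$; since $\text{ext}(N_0) = A \cup -A$, it suffices to exhibit an antipodal pair of closed octants that performs a complete rotation within $A \cup -A$.

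For each $t \in \mathbb{R}$, \autoref{Lemma 3.3.} gives $\overline{O(t)} \subset A$, and applying the antipodal map, which exchanges $A$ and $-A$, yields $-\overline{O(t)} \subset -A$. Thus the antipodal octant pair $\overline{O(t)} \cup -\overline{O(t)}$ lies in $A \cup -A = \text{ext}(N_0)$ for every $t$. Since $\mathbf{R}$ is continuous by construction and periodic of period $1$, the assignment $t \mapsto \overline{O(t)}$ is a continuous closed loop of octants. The four-fold symmetry built into $\mathbf{R}$, namely $\mathbf{R}(t + \tfrac{1}{4}) = R_{\pi/2}\mathbf{R}(t)$, further yields $\overline{O(t + \tfrac{1}{4})} = R_{\pi/2}\overline{O(t)}$, so the octant is $R_{\pi/2}$-rotated four times as $t$ advances over one period, completing a full $2\pi$ rotation about the $z$-axis. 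Applying \autoref{Lemma 3.4.} to this complete rotation of antipodal octants in $A \cup -A$ then gives the conclusion.

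The delicate step is matching the explicit one-parameter motion $t \mapsto \overline{O(t)}$ to the abstract notion of ``complete rotation'' invoked in \autoref{Lemma 3.4.}. That lemma's proof depends on every point of $U$ crossing the specific equatorial arc $L$ under the dual rotation; one therefore has to verify that, viewed in the dual frame, the family traced out by $\overline{O(t)}$ sweeps $N_0$ across $L$. Since $A$ was designed with precisely the four-fold rotational symmetry that this argument requires, the verification should reduce to routine geometry, and this is the step that most deserves explicit checking.
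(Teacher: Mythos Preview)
Your proposal is correct and follows essentially the same route as the paper: use Lemma~\ref{Lemma 3.3.} to place the antipodal octant pair $\overline{O(t)}\cup-\overline{O(t)}$ inside $A\cup -A=S^2\setminus N_0$ for every $t$, and then invoke Lemma~\ref{Lemma 3.4.}. The additional care you take in justifying that $t\mapsto\overline{O(t)}$ is a genuine complete rotation (via continuity and the four-fold symmetry $\mathbf{R}(t+\tfrac14)=R_{\pi/2}\mathbf{R}(t)$) supplies detail the paper leaves implicit, and the gap you flag about matching this motion to the hypothesis of Lemma~\ref{Lemma 3.4.} is likewise not spelled out in the paper's own argument.
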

\begin{proof}
    It follows from Lemma~\ref{Lemma 3.3.} and the definition of $N_0$ that $(\overline{O(t)}\cup-\overline{O(t)})\cap N_0=\emptyset$. Hence, two antipodal closed octants can perform complete rotations in $\text{ext}(N_0)$. Hence, it follows from Lemma~\ref{Lemma 3.4.} that $N_0$ contains no three mutually orthogonal directions.
\end{proof}

\begin{remark}
{\upshape
A simpler proof for Proposition~\ref{Proposition 3.5.} is the following. Let $u, v, w\in N_0$ be a set of mutually orthogonal vectors. Rotate around $u$ until one or both $v$, $w$ land on the boundary of $N_0$. For concreteness, say $v$ lands on the boundary, while $w$ may be on the boundary or not after rotation. Then $u$ and $w$ must belong to the great circle $r\cdot v = 0$. However, by construction, the circle $r\cdot v = 0$ contains two quadrants which do not belong to $N_0$. Since $u$ is in $N_0$ and $w$ at most on the boundary, it follows that $u$ and $w$ cannot be perpendicular. }
\end{remark}

If we define ${{N}_{1}}$ to be the double cap with poles at the positive and negative $z$ axis then ${{N}_{1}} \cap {{N}_{0}} = \emptyset$ and the union ${N}:={{N}_{1}} \cup {{N}_{0}}$ forms a non-KS set with measure $|{N}|=|{{N}_{1}}| + |{{N}_{0}}|$ (see \autoref{fig:4}). The measures can be calculated by standard integration: %(see \hyperref[appendix:B]{Appendix B}) 
\begin{equation}
\begin{aligned}
    |{{N}_{1}}|&=\frac{1}{4\pi}\left(2\int_{0}^{2\pi}\dd{\phi}\int_{0}^{\frac{\pi}{4}}\text{sin}(\theta)\ \dd{\theta}\right)  =1-\frac{1}{\sqrt{2}}~,\\
    |{{N}_{0}}|&=\frac{1}{2\pi}\left[8\int_{-\frac{1}{2}}^{0}\sqrt{\frac{1}{2}-{t}^{2}}\frac{\dd{}}{\dd{t}}\left(\text{arctan}(\sqrt{2}t))\right)\dd{t}\right.\\
    &\left.+4\int_{-\frac{1}{2}}^{0}\frac{\frac{1}{2}+t+{t}^{2}}{1+t}\frac{\dd{}}{\dd{t}}\left(\text{arctan}(\sqrt{2}(t+1))\right)\dd{t} \right] \\
    &=1-\frac{1}{\sqrt{2}}+\frac{\sqrt{2}}{\pi}\text{ln}(2)
\end{aligned}    
\end{equation}
where we used the relations $\phi=\text{arctan}(y/z)$, $\cos\theta = z$ for evaluating $|N_0|$. It follows that
\begin{equation}
    |{N}|=2-\sqrt{2}+\frac{\sqrt{2}\ln{2}}{\pi} \approx 0.8978,
\end{equation}
where the measures are normalized such that $|{S}^{2}|=1$. 

This gives a lower bound for the maximal non-KS set ${{N}_{\text{max}}}$:
\begin{equation}
    |{{N}_{\text{max}}}| \geq 2-\sqrt{2}+\frac{\sqrt{2}\ln{2}}{\pi} \approx 0.8978
\end{equation}

\begin{conj}
$N$ is the non-KS set of largest measure.
\end{conj}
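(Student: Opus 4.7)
The plan is to decouple the problem into two essentially independent optimisation sub-problems for the two colour classes. Given a non-KS set $(A,v)$, write $V_i := v^{-1}(i)$. Conditions (NKS.2) with $|I|=2$ and (NKS.3) translate directly into the requirements that $V_1$ contains no two orthogonal directions and $V_0$ contains no three mutually orthogonal directions, and (NKS.1) makes both sets antipodally symmetric. Since $V_0 \cap V_1 = \emptyset$, one has $|A| \le \sup |V_0| + \sup |V_1|$, with equality whenever these suprema admit disjoint extremisers. The conjecture therefore follows from three claims: (i) $|V_1| \le 1 - 1/\sqrt{2}$ for every antipodally symmetric $V_1 \subseteq S^2$ with no two orthogonal directions; (ii) $|V_0| \le 1 - 1/\sqrt{2} + \sqrt{2}\ln 2/\pi$ for every antipodally symmetric $V_0 \subseteq S^2$ with no three mutually orthogonal directions; and (iii) the disjointness $N_0 \cap N_1 = \emptyset$, which is immediate since $N_1 \subset A \cup (-A)$.

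Step (i) is precisely the Kalai-Wilson conjecture for Witsenhausen's problem, which remains open; any proof of the present conjecture is therefore at least as hard, and I would invoke Kalai-Wilson as a black-box input.

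Step (ii) is the genuinely new ingredient and, as hinted in the abstract, has a spherical moving-sofa flavour: by Lemma~\ref{Lemma 3.4.}, $V_0$ avoids orthogonal triples whenever two antipodal octants can rotate in $\mathrm{ext}(V_0)$, and conversely any orthogonal triple in $V_0$ obstructs such a rotation. I would first establish existence of a maximiser via a compactness argument (the class of closed antipodally symmetric sets with no orthogonal triple is closed under Hausdorff limits, and measure is upper semicontinuous on it), then argue by averaging or direct symmetry exchange that one may assume invariance under the full group generated by the antipodal map and the $\pi/2$-rotations about the three coordinate axes. Under this symmetry the problem reduces to a one-dimensional variational problem for the boundary of a fundamental domain, subject to the pointwise extremality condition that every boundary point participate in an orthogonal triple whose other two vertices also lie on the boundary --- otherwise the region could be locally enlarged without creating an inadmissible triple. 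The arcs $\mathbf{r}_1, \mathbf{r}_2, \mathbf{r}_3$ manifestly satisfy this condition by construction, and the target is to show they are the unique symmetric solution.

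The main obstacle is precisely step~(ii): converting the global combinatorial "no orthogonal triple" constraint into a rigorous pointwise Euler-Lagrange condition on $\partial V_0$, and then establishing uniqueness of the symmetric extremum, is the hard content of the associated spherical moving-sofa problem. A complementary linear-programming or spectral bound, in the spirit of DeCorte-Pikhurko's treatment of Witsenhausen's problem adapted to triple-avoidance, could yield a tight numerical upper bound even if not an exact match; combining such an analytical bound with the symmetry-reduced variational analysis seems the most plausible route to a complete proof.
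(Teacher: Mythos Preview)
The statement you are attempting to prove is explicitly labelled a \emph{Conjecture} in the paper; the authors offer no proof, so there is nothing to compare your argument against. Your proposal is best read as a reduction strategy rather than a proof, and you are candid that both of its key inputs remain open.

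Your decoupling into the colour classes $V_0=v^{-1}(0)$ and $V_1=v^{-1}(1)$ is correct: a disjoint, antipodally symmetric pair with $V_1$ free of orthogonal pairs and $V_0$ free of orthogonal triples is exactly the data of a three-dimensional non-KS set, and the bound $|A|\le \sup|V_0|+\sup|V_1|$ follows. Since $N_0\cap N_1=\emptyset$, the conjecture would indeed follow from your claims (i) and (ii).

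Two cautions. First, your assertion that ``any proof of the present conjecture is therefore at least as hard'' as Kalai--Wilson overreaches: only \emph{your particular strategy} requires Kalai--Wilson as an input. The conjecture might conceivably yield to an argument that exploits the disjointness coupling between $V_0$ and $V_1$ without ever determining $\sup|V_1|$ in isolation, so the implication you assert in that direction is not established. Second, step (ii) is at least as speculative as step (i): the symmetry-reduction and boundary Euler--Lagrange programme you sketch is reasonable heuristics, but the analogous programme for the planar moving sofa problem has resisted rigorous resolution for decades, and there is no evident reason the spherical no-triple variant should be easier. In short, you have correctly identified a clean reduction, but it is a reduction to two problems each of which is itself open---which is consistent with the paper's decision to state the result as a conjecture rather than a theorem.
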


\begin{figure}[!ht]
\centering
\begin{subfigure}{.5\textwidth}
  \centering
  \includegraphics[width=.55\linewidth]{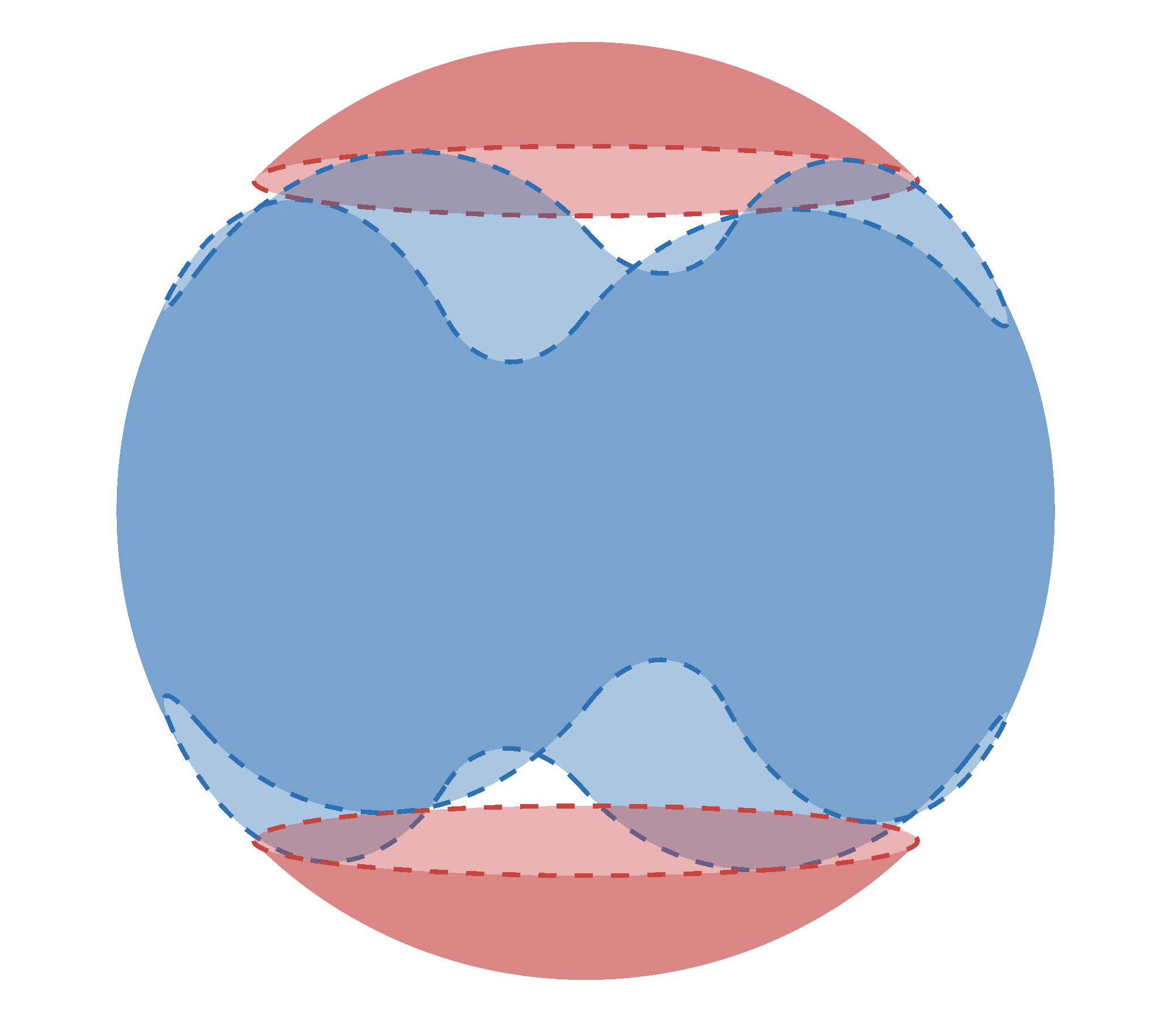}
  \caption{3D Plot}
  \label{fig:4a}
\end{subfigure}%
\\
\begin{subfigure}{.5\textwidth}
  \centering
  \includegraphics[width=.55\linewidth]{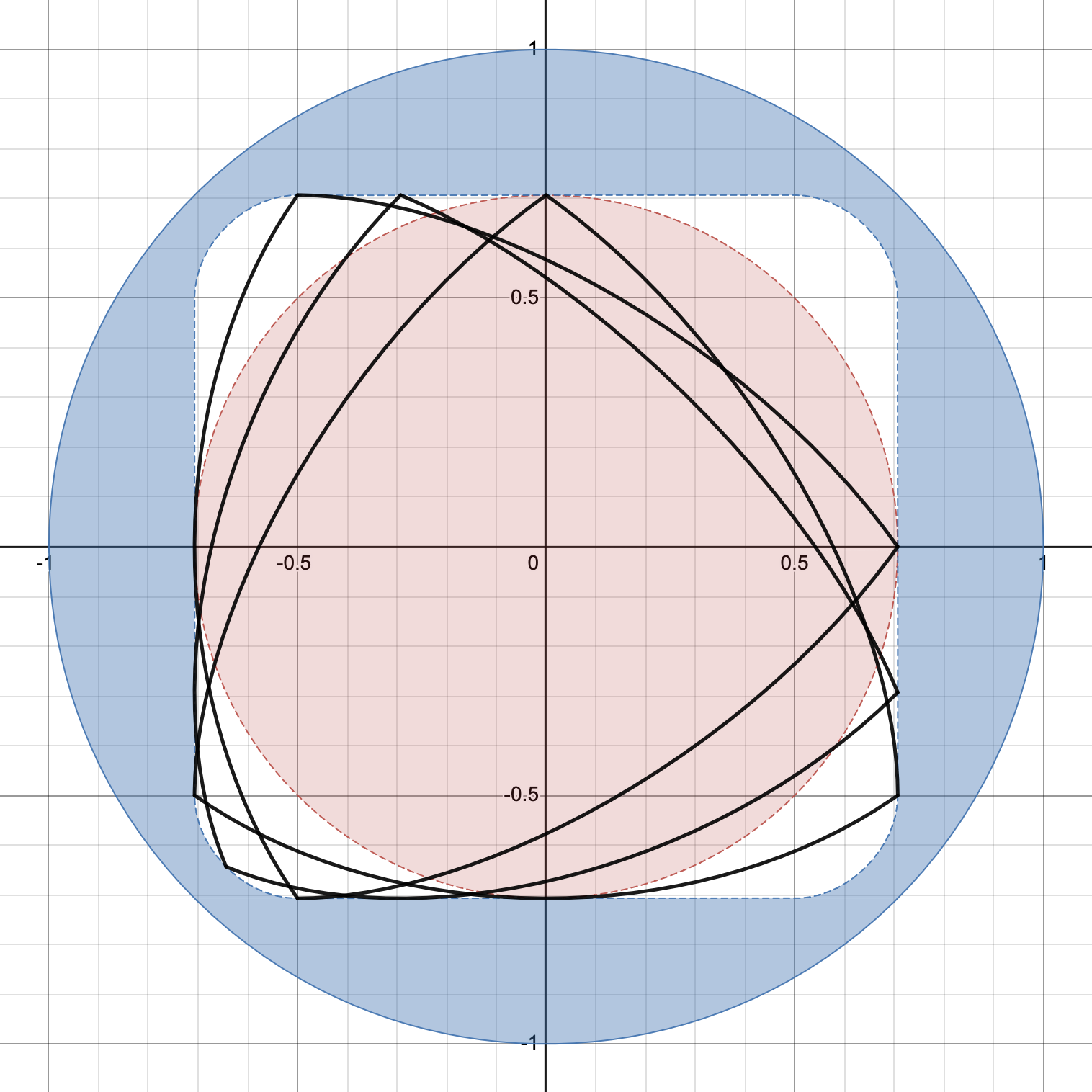}
  \caption{2D Projection onto $xy$-plane}
  \label{fig:4b}
\end{subfigure}
\caption{Plots of the non-KS set ${N}={{N}_{1}} \cup {{N}_{0}}$ where ${{N}_{1}}$ is plotted in red and ${{N}_{0}}$ in blue. The grey regions depict an octant rotating within $\text{ext}(N_0)$.}
\label{fig:4}
\end{figure}

\subsection{Maximal Non-KS Sets and Minimal KS Sets}

Say that a set $U \subseteq {S}^{d-1}$ avoids a set $P \subseteq {S}^{d-1}$ if there exists no orientation of $P$ that lies entirely inside $U$. By definition a non-KS set must avoid all KS sets. Hence a lower bound for the size of a KS set can be established by finding the smallest set of directions that any given non-KS set avoids. The following result by Alexandru Damian \cite{whitesides2015open} provides a lower bound for the minimal number of points that do not fit in an arbitrary subset $X \subseteq {S}^{d-1}$ of measure $|U|$.

\begin{prop}
\label{Theorem 3.7.} 
If a subset $U \subseteq {S}^{d-1}$ with measure $|U|$ avoids an $n$-point set $P$ then $n \geq \frac{1}{1-|U|}$.
\end{prop}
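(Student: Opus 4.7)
The plan is to prove Proposition~\ref{Theorem 3.7.} by a first-moment/averaging argument over random orientations of $P$, using the rotation invariance of the uniform measure on $S^{d-1}$. The key observation is that the hypothesis ``$U$ avoids $P$'' is a deterministic statement (\emph{every} rigid placement of $P$ leaves at least one point outside $U$), so averaging the number of points that land in $U^{c}$ gives a lower bound of $1$ on that average, and the average can be computed exactly from $|U|$.

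More concretely, I would proceed in the following steps. First, fix any realization $P=\{p_{1},\dots,p_{n}\}\subset S^{d-1}$ and let $R$ be a Haar-random element of $SO(d)$. For each fixed $p_{i}$, the pushforward of Haar measure under the orbit map $R\mapsto R p_{i}$ is precisely the uniform probability measure on $S^{d-1}$, so
\begin{equation}
\Pr\!\bigl[R p_{i}\in U^{c}\bigr]=|U^{c}|=1-|U|.
\end{equation}
Second, introduce the counting random variable
\begin{equation}
X(R):=\#\bigl\{i:R p_{i}\in U^{c}\bigr\}=\sum_{i=1}^{n}\mathbb{1}_{U^{c}}(R p_{i}),
\end{equation}
and use linearity of expectation to get $\mathbb{E}[X]=n(1-|U|)$. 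Third, translate the avoidance hypothesis: since $U$ avoids $P$, for \emph{every} $R\in SO(d)$ the set $R(P)$ is not contained in $U$, i.e.\ $X(R)\geq 1$ pointwise. Taking expectations, $1\leq \mathbb{E}[X]=n(1-|U|)$, which rearranges to $n\geq 1/(1-|U|)$.

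Two small points deserve care but should not be real obstacles. The measurability of $X$ as a function on $SO(d)$ follows from measurability of $U$ (hence of $U^{c}$) and continuity of the $SO(d)$-action on $S^{d-1}$, so the expectation is well-defined. Also, implicit in the statement is that ``orientation'' means rigid motion on $S^{d-1}$, i.e.\ an element of $SO(d)$ (or $O(d)$, which gives the same pushforward); either choice works because both act transitively with the uniform measure as the unique invariant probability measure. The only genuinely conceptual step is recognizing that the problem is tailor-made for a first-moment argument; once that is in hand, the proof is essentially one line plus bookkeeping, and I do not anticipate any hard technical obstacle.
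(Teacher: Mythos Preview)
Your proposal is correct and is essentially the same first-moment argument the paper gives: the paper counts, under a Haar-random rotation, the number of points landing \emph{inside} $U$ and shows its expectation exceeds $n-1$, while you equivalently count points landing in $U^{c}$ and bound that expectation below by $1$. Apart from this cosmetic switch and the paper's contrapositive phrasing, the arguments coincide.
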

\begin{proof}
  Assume to the contrary that there is an $n$-point set $P$ with $n<\frac{1}{1-|U|}$ that $U$ avoids. Consider a random rotation of the set $P=\{{p}_{1},...,{p}_{n}\}$. Define a random indicator variable ${X}_{i}$ for each point ${p}_{i}$.
 \begin{equation}
  {X}_{i}  = \begin{cases}
                1 &{p}_{i} \in U \\
                0 &\text{otherwise}
             \end{cases}
 \end{equation}
 The expectation value for each ${X}_{i}$ is then
 \begin{equation}
     E\left[{X}_{i}\right] = P({X}_{i}=1) = |U| > \frac{n-1}{n}
 \end{equation}
 Defining by $X$ the total number of points that lie inside $U$ and computing its expectation using the linearity of expectation
 \begin{equation}
     E\left[X\right] = E\left[\sum\limits_{i=1}^{n}{X}_{i}\right] = \sum\limits_{i=1}^{n}E\left[{X}_{i}\right]=n|U|>n-1.
 \end{equation}
 Since $E\left[X\right]>n-1$ there must be at least one rotation such that all $n$ points lie inside $U$. So $U$ does not avoid $P$ which is a contradiction.
\end{proof}

\begin{theorem}
Consider a minimal KS set ${K}_{\text{min}}$ and a maximal non-KS set ${{N}_{\text{max}}}$. The following inequality must be satisfied
\begin{equation}
    |{K}_{\text{min}}| \geq \frac{1}{1-|{{N}_{\text{max}}}|}.
\end{equation}
In particular, in three dimensions it follows that
\begin{equation}
    |{K}_{\text{min}}| \geq \frac{1}{1-|{{N}_{\text{max}}}|} \geq \frac{1}{\sqrt{2}-1-\frac{\sqrt{2}\ln{2}}{\pi}} \approx 9.79.
\end{equation}
Hence a minimal KS set must contain at least 10 directions in three dimensions.
\end{theorem}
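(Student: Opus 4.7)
The plan is to chain together two observations: first, the definitional fact that a non-KS set must avoid every KS set; and second, the probabilistic bound of Proposition~\ref{Theorem 3.7.}. If some rotated copy of a KS set $K$ were contained in $N_{\text{max}}$, then restricting the non-KS valuation map of $N_{\text{max}}$ to that copy of $K$ would produce a $\{0,1\}$-colouring of $K$ satisfying conditions (KS.1) and (KS.2), contradicting the fact that $K$ is a KS set. Hence $N_{\text{max}}$ avoids $K_{\text{min}}$ in the sense used in Proposition~\ref{Theorem 3.7.}.

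Applying that proposition with $U = N_{\text{max}}$ and $P = K_{\text{min}}$ directly yields the general inequality $|K_{\text{min}}| \geq 1/(1-|N_{\text{max}}|)$, which is the first claim. For the three-dimensional specialisation I would substitute the explicit lower bound $|N_{\text{max}}| \geq |N| = 2 - \sqrt{2} + \sqrt{2}\ln(2)/\pi$ established in Section~3.2. This gives $1 - |N_{\text{max}}| \leq \sqrt{2} - 1 - \sqrt{2}\ln(2)/\pi$, and therefore $|K_{\text{min}}| \geq (\sqrt{2} - 1 - \sqrt{2}\ln(2)/\pi)^{-1} \approx 9.79$. Since $|K_{\text{min}}|$ is a positive integer, it must be at least $10$.

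The proof is essentially a composition of results already in place, so no step is genuinely hard. The only point that deserves verification is that the notion of avoidance used in Proposition~\ref{Theorem 3.7.}, i.e. that no rotation of the point set $P$ embeds into $U$, matches what is needed here. This is clear because the KS conditions depend only on the mutual orthogonality pattern of the directions, which is rotation-invariant, so every rotated copy of a KS set is itself a KS set and the avoidance notions coincide.
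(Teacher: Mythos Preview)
Your proposal is correct and follows exactly the same route as the paper: the paper's own proof is simply ``the result follows directly from the above discussion,'' where that discussion consists of the observation that any non-KS set avoids every KS set together with Proposition~\ref{Theorem 3.7.} and the computed lower bound $|N_{\text{max}}|\geq |N|$. Your write-up merely makes these steps explicit, including the rotation-invariance check that justifies the avoidance hypothesis.
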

\begin{proof}
The result follows from the above discussion.
\end{proof}

It is interesting to observe the overlap of some small KS sets and our non-KS set $N$. Of course, the KS sets contain points that do not lie inside $N$ (in any orientation), but it is interesting to note, for example, that the Peres set is only just not covered by $N$ in the sense that $\overline{N}$ covers the Peres set (see \autoref{Peres_Conway_graphs}).

\begin{figure}[ht]
\centering
\begin{subfigure}{.5\textwidth}
  \centering
  \includegraphics[width=.5\linewidth]{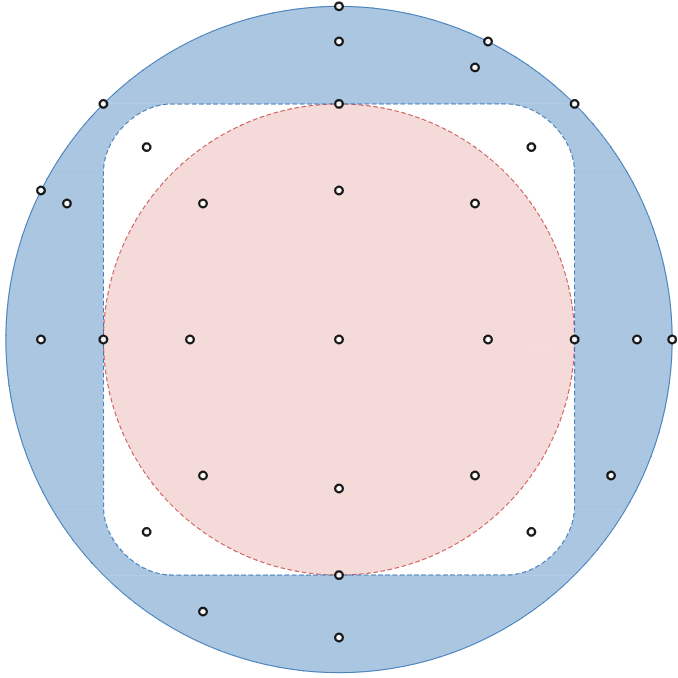}
  \caption{Conway's 31 direction KS set}
  %\label{}
\end{subfigure}%
\\
\begin{subfigure}{.5\textwidth}
  \centering
  \includegraphics[width=.5\linewidth]{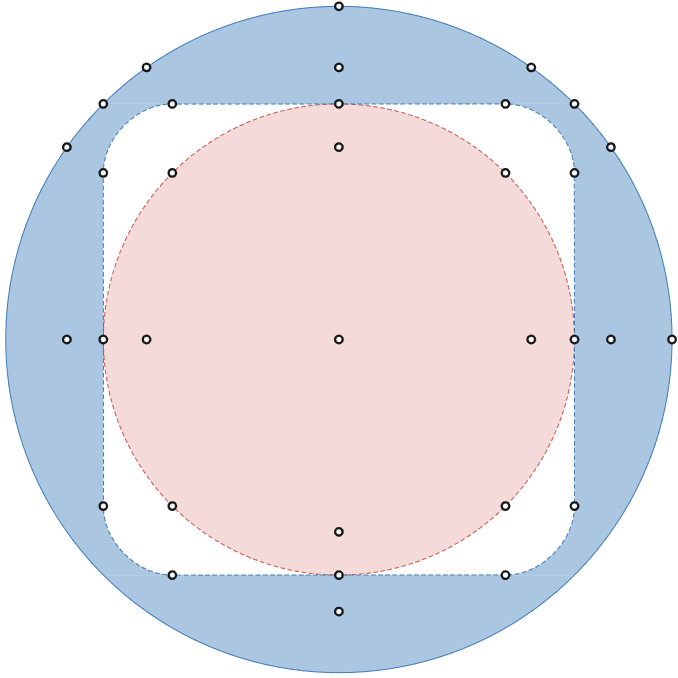}
  \caption{Peres' 33 direction KS set}
  %\label{}
\end{subfigure}
\caption{Plots of Conway's 31 direction KS set and Peres' 33 direction KS set overlaid on the non-KS set $N$.}
\label{Peres_Conway_graphs}
\end{figure}

%\begin{remark}
%It is known~\cite{PhysRevLett.83.3751} %~\cite{Pitowsky_1985, PhysRevLett.83.3751, PhysRevLett.83.3755, clifton2000simulating} 
%that there exist rational non-KS sets which are dense in $S^2$. However, since these are measure zero subsets of $S^2$, they do not provide any useful bound in the argument above. Of course, their existence does not invalidate the argument. 
%\end{remark}

\begin{remark}
It is known~\cite{PhysRevLett.83.3751} %~\cite{Pitowsky_1985, PhysRevLett.83.3751, PhysRevLett.83.3755, clifton2000simulating} 
that there exist rational non-KS sets which are dense in $S^2$. These sets form interesting examples of non-contextual configurations, however, their relevance for bounding the size of KS sets is limited due to their measure-zero size in $S^2$. Being of measure-zero implies that these rational non-KS sets occupy an infinitesimally small fraction of the sphere's surface and hence do not provide any meaningful bounds in the context of our inequalities. In the theorem above, the argument depends on measurable subsets $U \subset S^2$. Dense sets with zero measure, such as these rational configurations, do not impact the derived bound because their avoidance does not impose any constraints on $|K_{\text{min}}|$. 
\end{remark}

\section{Moving Sofa}

The original moving sofa problem, first formally discussed by Leo Moser \cite{moser1966moving} in 1966, is stated as follows:

\textbf{Problem.} What is the region of largest area which can be moved around a right-angled corner in a corridor of width one?

The area obtained is referred to as the sofa constant. The exact value of the sofa constant is an open problem, however there exist results on lower and upper bounds (see, e.g.~Refs.~\cite{gerver1992moving, kallus2018improved}).

Several variations of the moving sofa problem are said to have been considered by Conway, Shepard and other mathematicians \cite{stewart2004another} in the 1960's. One such variation is the ambidextrous moving sofa problem requiring the region to be able to make both right and left right-angled turns, which has since been considered by Romik~\cite{romik2018differential}.

There is an interesting connection between the set ${{N}_{0}}$ and a generalization of the moving sofa problem to a right-angled hallway defined on $S^2$ rather than $\mathbb{R}^2$.

\begin{prop}
A lower bound for the area of the largest shape that can be moved around a right-angled corner on the unit two-sphere is given by
\begin{equation}
|N_0|=1-\frac{1}{\sqrt{2}}+\frac{\sqrt{2}}{\pi}\text{ln}(2)
\end{equation}
\end{prop}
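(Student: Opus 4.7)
The plan is to identify $N_0$ itself as a shape of area $|N_0|$ that can be rotated around a right-angled corner on $S^2$, from which the lower bound follows immediately. Model the problem as follows: take the fixed antipodal octant pair $\mathcal{O}_\pm$ of Lemma~\ref{Lemma 3.4.} as the two ``walls'' of the spherical hallway, with right-angled corners at the octant vertices $(0,0,\pm 1)$, where three great-circle arcs meet pairwise at right angles; the hallway is then $\mathrm{ext}(\mathcal{O}_\pm)$. A shape $U\subset S^2$ \emph{moves around the corner} if there is a continuous one-parameter family of rotations $\{\rho_t\}_{t\in[0,1]}\subset SO(3)$ with $\rho_t(U)\subset\mathrm{ext}(\mathcal{O}_\pm)$ for every $t$ and with net rotation at least $\pi/2$ about the axis through the corner, so that the shape transits from one arm of the hallway to the other.

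The claim then follows from the rotating-octant machinery already in place. By Lemma~\ref{Lemma 3.3.}, for every $t\in\mathbb{R}$ one has $\overline{O(t)}\cup -\overline{O(t)}\subset A\cup -A = S^2\setminus N_0$, i.e.\ $N_0\subset\mathrm{ext}(O(t)\cup -O(t))$. Pick, continuously in $t\in[0,1]$, a rotation $\rho_t\in SO(3)$ sending the fixed reference pair $\mathcal{O}_\pm$ to $O(t)\cup -O(t)$; then $\rho_t^{-1}(N_0)$ is a continuous motion of $N_0$ that remains in $\mathrm{ext}(\mathcal{O}_\pm)$ throughout. As $t$ varies over $[0,1]$ the family $O(t)\cup -O(t)$ executes a full revolution about the $z$-axis, by the construction of $\mathbf{R}$ from the four $R_{\pi/2}$-rotated branches of $\mathbf{r}_0$, so $\rho_t^{-1}(N_0)$ in particular realises the $\pi/2$ turn corresponding to going from one arm of the spherical L-corridor to the other. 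The stated bound then follows from the already-computed value $|N_0|=1-1/\sqrt{2}+\sqrt{2}\ln 2/\pi$.

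The main conceptual obstacle is purely definitional, since the classical moving-sofa problem has no single canonical spherical analogue; taking the corner to be the vertex of a spherical octant and the hallway to be its exterior is the formulation that makes the lower bound fall out of the same rotating-octant construction that underlies Proposition~\ref{Proposition 3.5.}. Once this choice is fixed, no new calculation is required beyond verifying continuity of $t\mapsto\rho_t$, which is routine given the piecewise-analytic definition of $\mathbf{R}$.
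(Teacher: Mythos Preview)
Your proposal is correct and follows essentially the same approach as the paper: model the hallway as $\mathrm{ext}(\mathcal{O}_\pm)$, invoke Lemma~\ref{Lemma 3.3.} to see that the rotating antipodal octants $O(t)\cup -O(t)$ stay inside $A\cup -A$ (equivalently outside $N_0$), and then pass to the dual picture in which $N_0$ moves through the fixed hallway. The paper's proof is a two-sentence citation of Lemmas~\ref{Lemma 3.3.} and~\ref{Lemma 3.4.}; you have simply unpacked the dual-rotation step and the continuity of $t\mapsto\rho_t$ more explicitly, which is welcome but not a different argument.
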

\begin{proof}
Consider $\text{ext}(\mathcal{O}_\pm)$ as defined in Lemma~\ref{Lemma 3.4.} to be the hallway, then it follows from Lemma~\ref{Lemma 3.3.} and the definition of $N_0$ that the region $N_0$ can be moved around $\text{ext}(\mathcal{O}_\pm)$. Hence, $|N_0|=1-\frac{1}{\sqrt{2}}+\frac{\sqrt{2}}{\pi}\text{ln}(2)$ is a lower bound for the sofa constant in this variation of the problem.
\end{proof}
\vspace{12pt}

\section{Conclusion}

In this paper we have constructed a conceptually simple KS set of 168 directions. We have also developed the notion of large measurable non-KS sets that can be KS coloured. We constructed a large non-KS set and calculated its measure to deduce a lower bound for the maximal measure of such a set. Relying on this lower bound, we used a probability argument to obtain a lower bound of 10 directions for the size of any KS set. Although this lower bound does not improve upon the current lower bound of 22 directions \cite{uijlen2016kochen}, it does not rely on exhaustively checking the existence of KS graphs. The probability argument of Proposition~\ref{Theorem 3.7.} is qualitatively different and does not take into account the geometry of the non-KS set. It is conceivable that including such information would lead to a stronger bound while avoiding the burden of exhaustive searches.
As a byproduct, we obtained a lower bound for the constant in the moving sofa problem on $S^2$.

%\vspace{21pt}
{\bfseries Acknowledgements.} We would like to thank Carmen Constantin, Samson Abramsky, Curtis Bright, and Ad\'an Cabello for useful comments. We also thank the anonymous referee for valuable comments and suggestions, which have contributed to clarifying and improving the manuscript. TW would like to thank Oxford Physics for the award of the Gibbs Prize for the Best MPhys Research Project in 2023, on which this work is based. AC's research is supported by a Royal Society Dorothy Hodgkin Fellowship, grant number 231142.

\bibliography{bibliography}
\bibliographystyle{utcaps}

\end{document}